\documentclass[conference]{IEEEtran}
\IEEEoverridecommandlockouts
% The preceding line is only needed to identify funding in the first footnote. If that is unneeded, please comment it out.
\usepackage{booktabs}
\usepackage{soul}
\usepackage{framed}
\usepackage{subfigure}
\usepackage[linesnumbered,noline,noend,boxed]{algorithm2e}

\usepackage{amssymb}
\usepackage{mathtools}
\usepackage{listings}
\usepackage{courier}
\usepackage{xcolor}
\usepackage{amsthm}
\usepackage{enumitem}
\lstset{basicstyle=\footnotesize\ttfamily}
\newtheorem{lemma}{Lemma}
\newtheorem{theorem}{Theorem}
\newtheorem{definition}{Definition}

\def\BibTeX{{\rm B\kern-.05em{\sc i\kern-.025em b}\kern-.08em
    T\kern-.1667em\lower.7ex\hbox{E}\kern-.125emX}}
\begin{document}

\title{Privacy Preserving Distributed Energy Trading\\
{\small\textsuperscript{*}In Proceedings of the 40th International Conference on Distributed Computing Systems, 2020}
%\thanks{}
}

\author{Shangyu Xie\\
\IEEEauthorblockA{\textit{Department of Computer Science} \\
\textit{Illinois Institute of Technology}\\
%City, Country \\
sxie14@hawk.iit.edu}
\and
\IEEEauthorblockN{Han Wang}
\IEEEauthorblockA{\textit{Department of Computer Science} \\
\textit{Illinois Institute of Technology}\\
%City, Country \\
hwang185@hawk.iit.edu}
\and
\IEEEauthorblockN{Yuan Hong}
\IEEEauthorblockA{\textit{Department of Computer Science} \\
\textit{Illinois Institute of Technology}\\
%City, Country \\
yuan.hong@iit.edu}
\and
\IEEEauthorblockN{My Thai}
\IEEEauthorblockA{\textit{Department of CISE} \\
\textit{University of Florida}\\
%City, Country \\
mythai@cise.ufl.edu}}

\maketitle

\begin{abstract}
The smart grid incentivizes distributed agents with local generation (e.g., smart homes, and microgrids) to establish multi-agent systems for enhanced reliability and energy consumption efficiency. Distributed energy trading has emerged as one of the most important multi-agent systems on the power grid by enabling agents to sell their excessive local energy to each other or back to the grid. However, it requests all the agents to disclose their sensitive data (e.g., each agent's fine-grained local generation and demand load). In this paper, to the best of our knowledge, we propose the first privacy preserving distributed energy trading framework, \emph{Private Energy Market} (PEM), in which all the agents privately compute an optimal price for their trading (ensured by a Nash Equilibrium), and allocate pairwise energy trading amounts without disclosing sensitive data (via novel cryptographic protocols). Specifically, we model the trading problem as a non-cooperative Stackelberg game for all the agents (i.e., buyers and sellers) to determine the optimal price, and then derive the pairwise trading amounts. Our PEM framework can privately perform all the computations among all the agents without a trusted third party. We prove the \emph{privacy}, \emph{individual rationality}, and \emph{incentive compatibility} for the PEM framework. Finally, we conduct experiments on real datasets to validate the effectiveness and efficiency of the PEM.
\end{abstract}

\begin{IEEEkeywords}
Privacy; Secure Multiparty Computation; Stackelberg Game; Incentive Compatibility; Smart Grid
\end{IEEEkeywords}

\section{Introduction}
\label{sec:intro}
Distributed energy resources (DERs) have been increasingly deployed in the smart grid infrastructure to supplement the power supply with renewable energy such as solar and wind. Equipped with DERs, electricity consumers (e.g., small homes with installed solar panels, hospitals and campuses with deployed microgrids) can also be considered as suppliers that have reduced their dependence on the electricity grid \cite{sellenergy13}. Recently, multi-agent systems in the smart grid \cite{mas1} have attracted significant interests by considering the smart homes or microgrids as distributed agents \cite{CerquidesPR15,XieHW19}. 
In reality, smart homes or microgrids may generate excessive energy that cannot be consumed immediately during routine operations. A current solution to deal with the excessive energy is to either consume/waste it or sell it to the main grid \cite{Poor3party15,solarray}, even if many of the smart homes/microgrids have been equipped with local storage devices. Essentially, from the economic perspective of such multi-agent systems and social welfare, transmitting excessive energy back to the main grid or storing the energy is not an ideal outcome, compared to involving more consumers (which requests external energy) to receive the excessive electricity and consume them immediately \cite{lo3energy}.

To this end, the smart grid begins to incentivize agents with local energy to cooperate with each other, e.g., decentralized power supply restoration \cite{AgrawalKV15}, energy sharing \cite{HongESharing} and three-party energy trading \cite{Poor3party15}. Inspired by them, we study the \emph{distributed energy trading} problem which enables smart homes or microgrids to sell their excessive energy to other consumers besides selling back to the power market monopoly, the main grid \cite{WtushP2P,ZHANGp2pbid}. It will greatly benefit all the agents: (1) sellers can receive more rewards with a trading price generally higher than the price requested by the main grid, (2) buyers can reduce their costs (i.e., electricity bill) with the trading price generally lower than the retail price of the main grid \cite{mckenna2013photovoltaic}, and (3) interactions/loads between the consumers and the main grid can be reduced to provide better reliability via autonomy \cite{Fioretto0PMR17}.

\begin{figure}[!h]
	\centering
		\includegraphics[angle=0, width=1\linewidth]{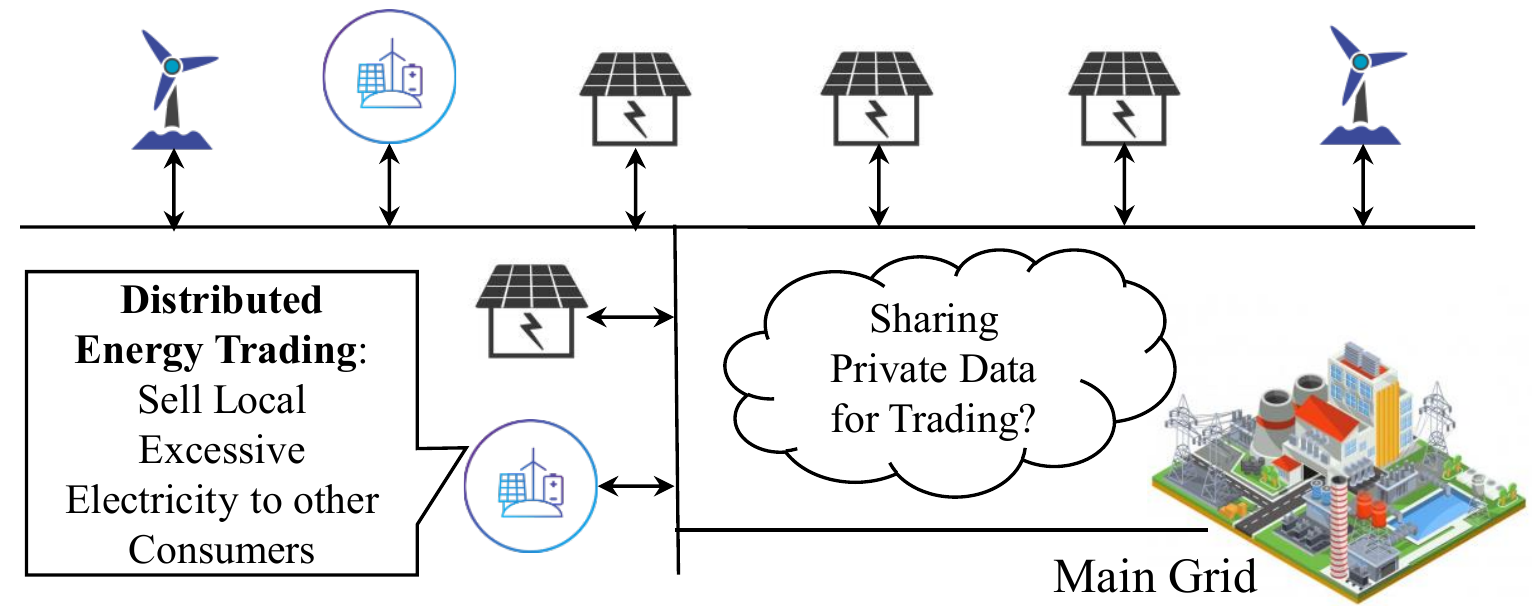}
	\caption[Optional caption for list of figures]
	{Distributed Energy Trading}
	\label{fig:p2p}
\end{figure}

However, as shown in Figure \ref{fig:p2p}, distributed energy trading requests significant amounts of local data from all the agents (e.g., each seller/buyer's local generation and demand load at different times) to compute the \emph{optimal price} and allocate the \emph{energy trading amounts} for all the sellers and buyers \cite{Poor3party15}. Disclosing such local data for computation would explicitly compromise their privacy. For instance, local generation reveals the generation capacities and time series generation patterns \cite{Kursawe:2011:PAS:2032162.2032172}, and the local demand load reveals consumption patterns (e.g., which appliance is used at which time) \cite{AcsC11,HongLW17}. 

To address such privacy concerns, we propose a novel privacy preserving distributed energy trading framework, namely ``Private Energy Market (PEM)'' in which all the agents privately compute the optimal price (ensured by a Nash Equilibrium of a designed Stackelberg game) and allocate pairwise energy trading amounts without disclosing local data. To this end, our PEM framework ensures that all the computations are performed in novel cryptographic protocols under the theory of secure multiparty computation (MPC) \cite{Yao86,Goldreich87} which provides provable privacy guarantee. Thus, the major contributions of this paper are summarized as follows: 

\begin{enumerate}

    \item To our best knowledge, the propose PEM is the first privacy preserving distributed energy trading framework, which enables all the agents on the electric grid to privately compute their optimal trading price (ensured by a Nash Equilibrium) and pairwise trading amounts, as well as complete their pairwise transactions without disclosing their private data (via cryptographic protocols). 
    \item We model a Stackelberg game \cite{stackgame} in the PEM framework, which ensures \emph{privacy} \cite{Yao86}, \emph{individual rationality}, and \emph{incentive compatibility} \cite{Nisan:2007:AGT:1296179} for all the agents. Theoretical analyses are given to prove all of the three properties.
    
    \item We implement a prototype for the proposed PEM framework with negligible latency in real time. We also conduct substantial experimental evaluations on real datasets to validate the system performance of the PEM.
    
\end{enumerate}

The remainder of this paper is organized as follows. In Section \ref{sec:pre}, we formulate the problem, discuss the threat model and the PEM framework. Section \ref{sec:mechanism} illustrates the energy trading model under the Stackelberg game. Section \ref{sec:pptrading} presents the cryptographic protocols for the PEM. Theoretical analyses are given in Section \ref{sec:analysis}. Section \ref{sec:discuss} gives some discussions. In Section \ref{sec:exp}, we elaborate our system implementation and demonstrate the experimental results. Section \ref{sec:related} reviews the literature, and Section \ref{sec:concl} draws the conclusions.

\section{Preliminaries}
\label{sec:pre}
In this section, we present some preliminaries for the distributed energy trading and the PEM framework. 

\subsection{Distributed Energy Trading}

We first introduce the background \cite{Poor3party15,CerquidesPR15, HongIJER15,mckenna2013photovoltaic} on the power grid, where agents represent the consumers with local generation, e.g., smart homes, and microgrids. 

\begin{itemize}
\item Energy trading occurs over a fixed length of periods, each of which is referred as a ``trading window''. All the agents complete their transactions (either selling or buying energy) within each trading window. 

\item Agent can be a buyer in a trading window, and a seller in another trading window, but cannot be both in any trading window (otherwise, its payoff would not be optimal \cite{HongIJER15}).

\item Each seller can decide how much energy it consumes (including charging its battery if available \cite{solarray}) and how much energy is available in the current trading window. 

\item We assume that the main grid has unlimited power supply with a higher price than distributed trading \cite{mckenna2013photovoltaic}, and energy is transmitted with a negligible loss. 
\end{itemize}

We denote the main grid as $M$ and the set of agents as $\Phi$ (with cardinality $|\Phi|$). For each agent $H_i, i\in[1,|\Phi|]$, we denote its generation (e.g., from solar panels) and demand load in trading window $t$ as $g_i^t$ and $l_i^t$, respectively. Each agent $H_i$ optionally installs an energy storage device or battery \cite{ninghuiBat12} with capacity $Cap_i$ (the maximum energy storage after charging), which can be specified as 0 (if ``no battery''). 
Denoting the energy amount charging into or discharging out of the battery as $b_i^t$ (in trading window $t$), if charging, we have $b_i^t>0$; if discharging, we have $b_i^t<0$. Then, we can define the net energy of $H_i$ as $sn_i^t$:

\begin{equation}
 	sn_i^t=g_i^t-l_i^t-b_i^t
\label{eq:nete}
\end{equation}

In every trading window $t$, each agent $H_i$ will be classified as either buyer or seller according to their net energy: (1) if $sn_i^t>0$, $H_i$ is a seller, (2) if $sn_i^t<0$, $H_i$ is a buyer, and (3) if $sn_i^t=0$, $H_i$ will be off market. Then, we formally define $\Phi_s^t=\{\forall H_i\in \Phi, sn_i^t> 0\}$ as the set of sellers and $\Phi_b^t=\{\forall H_j\in \Phi, sn_j^t< 0\}$ as the set of buyers, where the market supply of sellers $E_s^t$ and the market demand of buyers $E_b^t$ can be derived as: 
 \begin{equation}
 \small
 E_{s}^t=\sum_{H_i\in \Phi_s^t}sn_i^t>0\text{ and } E_{b}^t=\sum_{H_j\in\Phi_b^t}|sn_j^t|
 \label{eq:market}
 \end{equation}

\noindent\textbf{Optimal Trading Price}. At the end of every trading window, a seller can store the unsold energy or sell the unsold energy to the main grid \cite{solarray}. However, the price offered by the main grid (denoted as $pb_g^t$) is much lower than the regular retail electricity price for purchasing from the grid (denoted as $ps_g^t$) \cite{mckenna2013photovoltaic}. In the energy trading market, while trading energy in window $t$, all the buyers and sellers will jointly learn an optimal price $p^t$ between $pb_g^t$ and $ps_g^t$ \cite{mckenna2013photovoltaic} where all the players achieve an equilibrium in a game (with \emph{individual rationality} and \emph{incentive compatibility} \cite{microeco}).

PEM also sets an acceptable market price range $[p_l, p_h]$ to incentivize the sellers or buyers to join the trading \cite{mckenna2013photovoltaic} such that
the price $p^t$ in the trading window $t$ satisfies:

\begin{equation}
pb_g^t<p_l\leq p^t\leq p_h<ps_g^t
\label{eq:price}
\end{equation}

which is set by the PEM rather than specific agents.\footnotemark[1]\footnotetext[1]{If $p^t>ps_g^t$, all the rational buyers will purchase energy directly from the main grid; if $p^t<pb_g^t$, all the rational sellers will sell the energy directly to the main grid. Thus, PEM specifies a reasonable price range.}

Section \ref{sec:mechanism} will illustrate how to derive the optimal price and allocate energy trading amounts in every trading window.

\subsection{Threat Model}

More importantly, our PEM framework addresses the privacy concerns of all the participants (e.g.,.  agents with local energy) in the distributed energy trading. Specifically, to realize the energy trading, all the agents $\forall i\in[1,|\Phi|], H_i$ should share its local private information to a trusted third party so as to compute their optimal price as well as allocating pairwise energy trading amounts. However, such shared local information are sensitive in general \cite{SankarRMP13,HongLW17,TanGP13}, e.g., $H_i$'s local energy generation amount, energy consumption amount, battery storage amount, and its utility parameter (which are detailed in Section \ref{sec:mechanism}). 

To tackle the above concerns, we propose the PEM framework (without a trusted third party) based on efficient cryptographic protocols \cite{Yao86,Goldreich87} to privately function distributed energy trading without disclosing local information. We define the \emph{threat model} in the distributed energy trading as below:

\begin{itemize}
    
    \item We assume semi-honest adversarial model for preserving the privacy in our cryptographic protocols: all the agents are curious to learn private information from each other \cite{Yao86,Goldreichenc} but do not maliciously corrupt the protocol.
    
    \item Besides the semi-honest model, all the agents have the incentive to improve its payoff by cheating on its data.
    
    \item All the messages in the framework are assumed to be transmitted in a secure channel.
    
\end{itemize}

\subsection{PEM Framework}

To sum up, PEM will provide the following three properties against the adversaries: 

\begin{itemize}
    \item \textbf{Privacy}: each seller/buyer's privacy is protected in the PEM with provable privacy guarantee.
    \item \textbf{Individual Rationality}: each seller/buyer has a higher payoff by participating in the PEM.
    \item \textbf{Incentive Compatibility}: each seller/buyer cannot improve its payoff by untruthfully changing its strategy.
\end{itemize} 

Section \ref{sec:pptrading} will illustrate the cryptographic protocols for our PEM framework, and Section \ref{sec:analysis} will analyze the privacy/security and incentive compatibility to protect the trading under the threat model defined earlier.

\section{Distributed Energy Trading}
\label{sec:mechanism}
In this section, we first present the distributed trading scheme for PEM without privacy consideration.  
\subsection{Incentive Measurement}

We first define two functions to measure the incentives for both sellers and buyers in the trading \cite{Poor3party15}. The utility function measures the payoff received by each seller while the cost function measures how much each buyer pays. 

\vspace{0.05in}

\noindent\textbf{Seller's Utility Function} \cite{UtilityFunc10,ninghuiBat12} is defined to quantify the total utility of any seller $H_i \in \Phi_s^t$ in trading window $t$:

\begin{equation}
\small
U_i^t=k_i^t\log(1+l_i^t+\epsilon_i^t*b_i^t)+p^t*(g_i^t-l_i^t-b_i^t)
\label{eq:utility}
\end{equation}
where $k_i^t>0$ is the load behavior preference parameter of the seller $H_i$ (\emph{either locally consuming more energy or selling them}), $p^t$ is the market price. $l_i^t$ and $g_i^t$ are defined as the load and generation of $H_i$. For the battery, $b_i^t$ is defined as the energy charging/discharging amount: \emph{charging if positive (as additional load)} and \emph{discharging if negative (as additional supply)}; $\epsilon_i^t \in (0,1)$ represents the loss coefficient for the battery, which measure the ratio of battery's  contribution amount as load (charging) utility. 

\vspace{0.05in}

\noindent\textbf{Buyer's Cost Function} is defined to measure the cost of any buyer $H_j\in \Phi_b^t$ from the energy market and main grid:

\begin{equation}
\small
C_j^t=p^t*x_j^t+ps_g^t*(l_j^t+b_j^t-g_j^t-x_j^t)
\label{eq:cost}
\end{equation}

Similarly, $l_j^t$, $g_j^t$, and $b_j^t$ denote the buyer's local load, generation, and battery charging/discharging amounts, respectively.
Moreover, $x_j^t$ is defined as the energy amount that $H_j$ purchased from the trading market, thus we have $0<x_j^t\leq l_j^t+b_j^t-g_j^t$. 

\subsection{Stackelberg Game for PEM}
\label{sec:stackel}

To further pursue the cooperation of agents, two coalitions are formed based on each agent's net energy in every trading window (seller coalition and buyer coalition; the agents in two coalitions change over time). In our PEM framework, the seller coalition sells energy with the total supply while the buyer coalition purchases energy with their total demand, and their shares of energy to sell/buy are allocated proportional to their input shares (as detailed in Section \ref{sec:dist}). Such trading mechanism could make the market more stable, and guarantee the payoffs for conservative sellers/buyers who may not want to fully compete with other sellers/buyers. 
\subsubsection{Stackelberg Game}
 
Per the two (utility and cost) functions defined for sellers and buyers, the objectives of two coalitions consist of two aspects: (1) buyers incline to minimize their costs (as a coalition); (2) sellers incline to maximize their utility. To learn the optimal price, we propose a Stackelberg game for seller and buyer coalitions \cite{stackgame}.

Specifically, the market supply (from agents) is generally less than market demand (since renewable energy cannot feed all the load in current practice \cite{Poor3party15}).  Therefore, in the Stackelberg game, the buyer coalition is specified as the leader while the seller coalition is defined as the follower (otherwise, sellers will dominate the market). Then, the game $\mathcal{G}$ can be formally defined as:

\begin{equation}
\mathcal{G}=\{\Phi_b^t\cup \Phi_s^t, \{l_i^t\}_{H_i \in \Phi_s^t}, \{U_i^t\}_{H_i \in \Phi_s^t}, p^t, \Gamma^t\}    
\end{equation}
with the following components in each trading window $t$:
\begin{itemize}

\item the buyer coalition $\Phi_b^t$ is the leader to set up the price while the seller coalition $\Phi_s^t$ chooses their strategies as a response to the proposed price.

\item $\{l_i^t\}_{H_i \in \Phi_s^t}$ is the set of load profiles of all the sellers
(strategies) to maximize their payoffs.  

\item $U_i^t$ is the utility function of seller $H_i$. 

\item $p^t$ is the price proposed by the buyer coalition.

\item $\Gamma^t$ is the total cost for the buyer coalition:

\begin{equation}
\small
    \Gamma^t=\sum_{H_j \in \Phi_b^t}C_j^t=
    p^t*E_s^t+ps_g^t*(E_b^t-E_s^t)
  \label{eq:total_cost}
\end{equation}

where $E_s^t$ and $E_b^t$ are the market supply/demand (Eq. \ref{eq:market}). 

\end{itemize}

Then, the objective of the model is to minimize the total costs of the buyers/leader and to maximize the individual utility function of each seller/follower (such that the seller coalition's total utility is also maximized) by choosing their own strategies. We define the equilibrium as below:

\begin{definition}
The set of strategies $(\{l_i^{t\ast}\}_{H_i \in \Phi_s^t}, p^{t\ast})$ is an equilibrium for the game $\mathcal{G}$, if and only if it satisfies:
\begin{align*}
    U_i^t(\{l_i^{t*}\}_{H_i \in \Phi_s^t}, p^{t\ast})&\geq  U_i^t(\{l^{t}_k, \{l_i^{t*}\}_{H_i \in \Phi_s^t, i\ne k}\}, p^{t*})\\
    \Gamma^t (\{l_i^{t*}\}_{H_i \in \Phi_s^t}, p^{t\ast})&\leq  \Gamma^t (\{l_i^{t*}\}_{H_i \in \Phi_s^t}, p)
\end{align*}

\emph{where} $p_l\leq p^{t*}\leq p_h$. 
\end{definition}

Therefore, we seek for the equilibrium of this game in which the follower (aka. sellers) derives the best response to the optimal price proposed by the leader (aka. buyers). At this equilibrium, neither the leader nor any follower can increase its payoff via any \emph{unilateral strategic move}. In other words, when the game reaches the equilibrium, the buyers cannot reduce the cost by decreasing the price $p^t$ while the sellers cannot improve their utility by adjusting their strategies on load profiles $\{l_i^t\}_{H_i \in \Phi_s^t}$. 

\subsubsection{Optimal Price} 
We prove the existence and uniqueness of the equilibrium \cite{stackgame,yang2012crowdsourcing, Poor3party15} for $\mathcal{G}$:

\begin{lemma}
A unique equilibrium $(\{l_i^t\}_{H_i \in \Phi_s^t}, p^{t\ast})$ exists. 
\label{th:se}
\end{lemma}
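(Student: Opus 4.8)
The plan is to establish the result by \emph{backward induction}, the standard solution concept for a two-stage Stackelberg game: I would first characterize the followers' (sellers') best response to an arbitrary announced price $p^t$, then reduce the leader's (buyers') problem to a single-variable optimization over that price, and argue existence and uniqueness separately on each stage. Because the followers move after observing $p^t$, every candidate equilibrium is pinned down once the optimal price is fixed, so it suffices to show (i) a unique follower response for each $p^t$ and (ii) a unique optimal $p^{t*}$.

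For the follower stage, fix a candidate price $p^t$ and treat each seller's load $l_i^t$ as the decision variable in $U_i^t$ of Eq.~(\ref{eq:utility}). Since $U_i^t$ depends only on $H_i$'s own load, the sellers' payoffs are additively decoupled and the followers' subgame has no strategic interaction, so it suffices to maximize each $U_i^t$ separately. I would compute $\partial^2 U_i^t/\partial (l_i^t)^2 = -k_i^t/(1+l_i^t+\epsilon_i^t b_i^t)^2 < 0$, using $k_i^t>0$, to conclude that $U_i^t$ is strictly concave in $l_i^t$. Strict concavity over the feasible load set guarantees a \emph{unique} maximizer, and the first-order condition $k_i^t/(1+l_i^t+\epsilon_i^t b_i^t)=p^t$ yields the closed-form best response $l_i^{t*}(p^t)=k_i^t/p^t-1-\epsilon_i^t b_i^t$, giving a well-defined best-response map $p^t \mapsto \{l_i^{t*}(p^t)\}$.

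For the leader stage, I would substitute these best responses into the supply $E_s^t=\sum_{H_i\in\Phi_s^t}(g_i^t-l_i^{t*}-b_i^t)$, which makes $E_s^t$ an affine function of $1/p^t$, and then into the total cost $\Gamma^t$ of Eq.~(\ref{eq:total_cost}) to obtain a function of the single variable $p^t$. Differentiating twice, the $ps_g^t\,(K/p^t)$ term (with $K=\sum_{H_i\in\Phi_s^t}k_i^t>0$) produces $d^2\Gamma^t/d(p^t)^2 = 2\,ps_g^t K/(p^t)^3 > 0$, so $\Gamma^t$ is strictly convex in $p^t$ on $(0,\infty)$. A strictly convex function on the compact interval $[p_l,p_h]$ attains its minimum at a unique point---the interior stationary point $p^{t*}=\sqrt{ps_g^t K/A}$ if it is feasible, otherwise the nearer endpoint---which establishes existence and uniqueness of the optimal price. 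Together with the unique follower responses, this produces the unique equilibrium pair $(\{l_i^{t*}\}_{H_i\in\Phi_s^t},p^{t*})$.

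The hard part will be the \emph{feasibility bookkeeping} rather than the convexity calculations. I would need to confirm that the aggregate constant $A=\sum_{H_i\in\Phi_s^t}[g_i^t+1+(\epsilon_i^t-1)b_i^t]$ is strictly positive, so that the stationary price is real and the convexity argument applies, and to verify that the best-response loads keep each agent a genuine seller (i.e., $l_i^{t*}\ge 0$ and $sn_i^{t*}>0$) and that the optimizer respects the price range of Eq.~(\ref{eq:price}). Because $\Gamma^t$ is strictly convex, clipping the unconstrained optimizer to $[p_l,p_h]$ preserves uniqueness, so these constraints only restrict \emph{where} the equilibrium lies and never create a second one.
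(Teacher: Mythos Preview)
Your proposal is correct and follows essentially the same two-step backward-induction argument as the paper: strict concavity of each $U_i^t$ in $l_i^t$ gives a unique follower best response, and substituting these into $\Gamma^t$ yields strict convexity in $p^t$, hence a unique optimal price and a unique Stackelberg equilibrium. The only noticeable discrepancy is that the paper's first-order condition (its Eq.~(9)--(10)) carries an extra factor $\epsilon_i^t$ in the numerator, whereas your derivative $k_i^t/(1+l_i^t+\epsilon_i^t b_i^t)=p^t$ is the one actually obtained from differentiating Eq.~(\ref{eq:utility}); this does not affect the structure of the argument, and both you and the paper arrive at the same second derivative $2\,ps_g^t\sum k_i^t/(p^t)^3>0$ and the same stationary price in Eq.~(\ref{eq:optimal_p}).
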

\begin{proof}
 First, we get the second derivative of the utility function $U_i^t$ (Eq. \ref{eq:utility}):
\begin{equation}
\small
\frac{\partial^2 U_i^t}{{\partial l_i^t}^2}=\frac{-k_i^t}{(1+l_i^t+\epsilon_i^tb_i^t)^2}
\end{equation}
which is always less than 0 since $k_i^t>0$. The utility function is concave with $l_i^t$. Then given any price $p^t$, each seller $H_i\in\Phi_s^t$ can only find a unique $l_i^t$ to get its maximum utility. On the contrary, the buyers can also find the optimal price while the sellers specify their load profiles in the Nash Equilibrium. Thus, the equilibrium $(\{l_i^t\}_{H_i \in \Phi_s^t}, p^{t\ast})$ exists.

Second, to prove the uniqueness of the equilibrium, we need to prove that the optimal price is unique for the minimum cost of the buyer coalition (leader of the game). We first find the optimal load profile for each seller $H_i\in\Phi_s^t$: $l_i^t$. We then get the first derivative of $H_i$'s utility function (whose value should be $0$ for the maximum utility):
\begin{equation}
\small
\frac{\partial U_i^t}{\partial l_i^t}=\frac{k_i^t\epsilon_i^t}{(1+l_i^t+\epsilon_i^tb_i^t)}-p^t=0
\end{equation}
Thus, we get the optimal load profile for seller $H_i$:

\begin{equation}
\small
l_i^t=\frac{k_i^t\epsilon_i^t}{p^t}-1-\epsilon_i^tb_i^t
\label{eq:opt}
\end{equation}

Replacing $l_i^t$ in the total cost function $\Gamma^t$, then we get the second derivative of $\Gamma^t$:
\begin{equation}
\small
\frac{
\partial^2\Gamma^t}{{\partial p^t}^2}=\sum_{H_i\in \Phi^t_s}\frac{2ps_g^tk_i^t}{{(p^t)}^3}>0
\end{equation}

Then, $\Gamma^t$ is strictly convex with $p^t$, which generates a unique optimal price. Thus, equilibrium $(\{l_i^t\}_{H_i \in \Phi_s^t}, p^{t\ast})$ is unique. This completes the proof.
\end{proof}

To find the optimal price $p^{t\ast}$ in game $\mathcal{G}$, we calculate the first derivative of $\Gamma^t$: 

\begin{equation}
\small
      \frac{
\partial \Gamma^t}{{\partial p^t}}=\sum_{H_i\in\Phi^t_s}(g_i^t+1+\epsilon_i^tb_i^t-b_i^t)-\frac{ps_g^t\sum_{H_i\in\Phi^t_s}k_i^t}{(p^t)^2}=0
\label{eq:x}
\end{equation}
Solving Eq. \ref{eq:x}, we have
\begin{equation}
\small
    \widehat{p^t}=\sqrt{\frac{ps_g^t\sum_{H_i\in\Phi^t_s}k_i^t}{\sum_{H_i\in\Phi^t_s}(g_i^t+1+\epsilon_i^tb_i^t-b_i^t)}}
    \label{eq:optimal_p}
\end{equation}

Therefore, we can get the optimal price $p^{t\ast}$ 
by integrating Eq. \ref{eq:optimal_p} and \ref{eq:price}.

\begin{equation}
\small
p^{t\ast}=\begin{cases}
	\widehat{p^t}, \qquad \widehat{p^t}\in[p_l, p_h]\\
 	p_l,\qquad \widehat{p^t}<p_l\\
    p_h,\qquad \widehat{p^t}>p_h \\
 	\end{cases}\label{eq:optimal}
\end{equation}

Replacing $p^t$ in the load profile $l_i^t$ (Eq. \ref{eq:opt}) with $p^{t\ast}$, we can get the optimal load profile (strategy) for each seller $H_i$: 
\begin{equation}
\small
l_i^{t\ast}=\frac{k_i^t\epsilon_i^t}{p^{t\ast}}-1-\epsilon_i^tb_i^t
\end{equation} 

Note that if there is no battery installed for the seller, we thus have $b_i^t=0$.

\subsection{Trading Scheme in an Extreme Market}

If the market supply in the PEM is greater than or equal to the market demand (this rarely occurs in the current smart grid infrastructure, ``extreme market''), to maintain a robust market, the market electricity price should be set to the lower bound $p_l$ which is still greater than the price $pb_g^t$ offered by the main grid. Different from the general market case, the sellers also maximize their utilities by selling the remaining energy to the main grid and the buyer coalition will buy the electricity for all its demand from the market (to minimize their costs). 

\subsection{Energy Distribution and Payment}
\label{sec:dist}

Considering $E_s^t<E_b^t$ as the general market and $E_s^t\geq E_b^t$ as the extreme market, our PEM framework allocates trading amount for each pair of buyer and seller based on the demand (general market) or supply ratio (extreme market) out of the total market supply and demand to ensure fairness of distribution. We now discuss the allocation strategies for the two markets.

\vspace{0.05in}

\noindent\textbf{General Market:} the optimal price $p^{t\ast}$ is proposed by the buyer coalition in the Stackelberg Game and all the market supply should be sold to the buyer coalition with price $p^{t\ast}$. In the buyer coalition, the amount of electricity should be allocated in terms of their demand ratio out of the total demand $E_b^t$. Then, each buyer $H_j\in \Phi_b^t$ requests energy with the amount $e_{ij}=sn_i^t*\frac{|sn_j^t|}{E_b^t}$ from seller $H_i\in\Phi_s^t$, and pays $m_{ji}=p^{t\ast}e_{ij}$ to seller $H_i$. 
\vspace{0.05in}

\noindent\textbf{Extreme Market:} the price is directly set as $p_l$. Similarly, each seller $H_i\in\Phi_s^t$ sells the amount of $e_{ij}=|sn_j^t|*\frac{sn_i^t}{E_s^t}$ to buyer $H_j\in\Phi_b^t$ and receives the payment of $m_{ji}=p_le_{ij}$ from buyer $H_j$.

\section{Cryptographic Protocols}
\label{sec:pptrading}
In this section, we present the cryptographic protocols for the distributed energy trading in PEM. 

\subsection{Cryptographic Building Blocks}

We adopt homomorphic encryption \cite{Paillier99} and garbled circuit \cite{Yao86,Goldreich87} as the building blocks to construct our protocols.

\vspace{0.05in}

\noindent\textbf{Homomorphic Encryption} (e.g., Paillier cryptosystem \cite{Paillier99}) is a semantically-secure public key encryption to generate the ciphertext of an arithmetic operation between two plaintexts by other operations between their ciphertexts. It has the additional property that given any two encrypted messages $E(A)$ and $E(B)$, we have $E(A+B)=E(A)*E(B) $, where $*$ denotes the multiplication of ciphertexts (in some abelian groups).

\vspace{0.05in}

\noindent\textbf{Garbled Circuit} was originally proposed by Yao \cite{Yao86}. It enables two parties to jointly compute a function without disclosing their private inputs where one party creates the garbled circuit and the other party evaluates the circuit to derive the result of the secure computation. Our protocols only incorporate garbled circuit (e.g., the \textsc{Fairplay} system \cite{MalkhiNPS04}) for realizing some light-weight computations (e.g., secure comparison) instead of the entire trading scheme.

\subsection{Overview of the PEM}

\begin{figure}[!h]
	\centering
		\includegraphics[angle=0, width=1\linewidth]{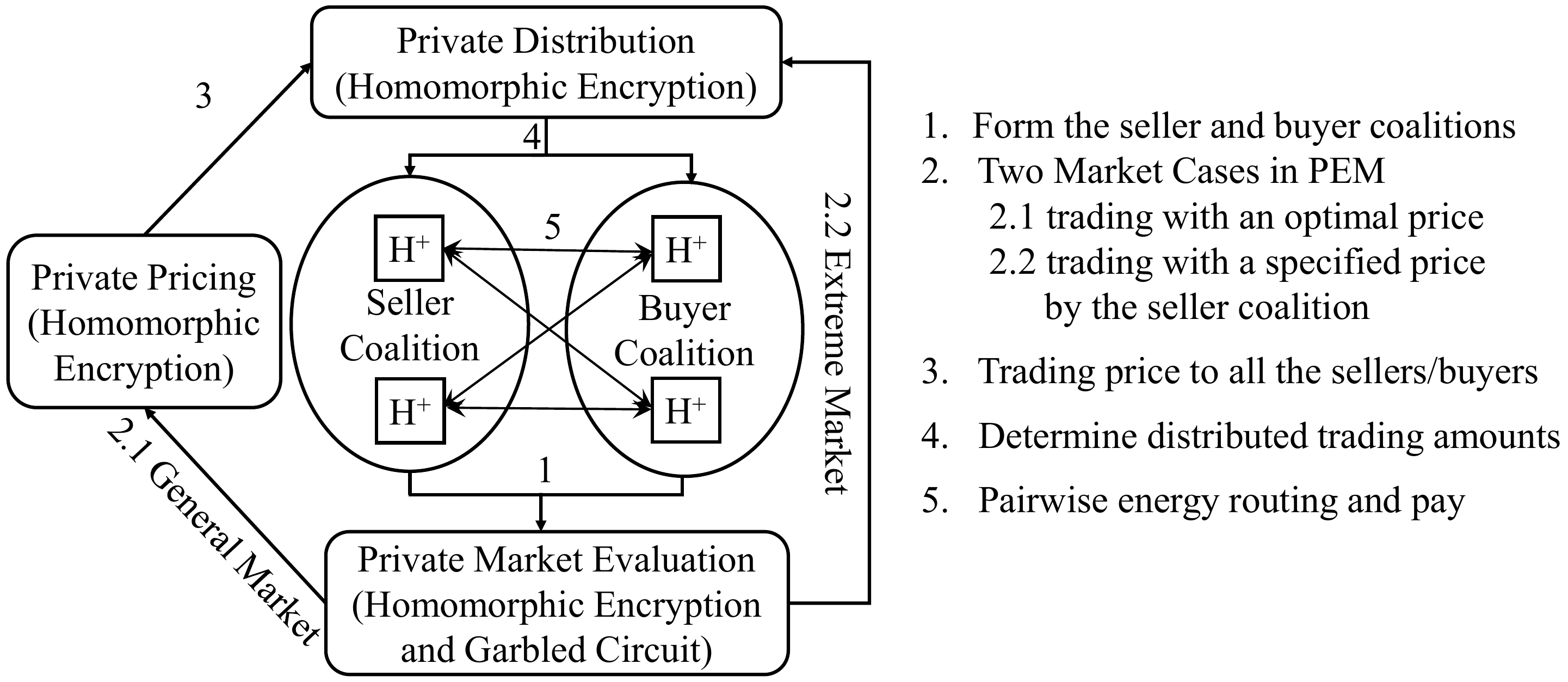}
	\caption[Optional caption for list of figures]
	{Overview of the PEM Framework}
	\label{fig:overview}
\end{figure}

As shown in Figure \ref{fig:overview} and Protocol \ref{algm:pem}, in the PEM framework, all the agents firstly form the seller and buyer coalitions (Initialization). Then, in Private Market Evaluation, the two coalitions securely evaluate the market. If a general market case is returned, Private Pricing is executed to securely compute the optimal price. For both general and extreme market cases, Private Distribution is executed to complete the trading. 

\IncMargin{1em}
 \begin{algorithm}[!ht]
 \small
\For{agent $H_i \in \Phi$}
{$H_i$ generates key pair $(pk_i, sk_i)$, and shares $pk_i$ in $\Phi$
}
\For{each trading window t}
{
 Initialize seller and buyer coalitions: $\Phi_s^t$, $\Phi_b^t$

$\Phi_s^t$ and $\Phi_b^t$ execute \textbf{Private Market Evaluation}

\If{$E_s^t< E_b^t$ (general market)}
{Execute \textbf{Private Pricing} (Protocol \ref{algm:SecPr}):
$p^t=p^{t\ast}$
}

\Else{

Set the current price $p^t=p_l$ (extreme market)}

$\Phi_s^t$ and $\Phi_b^t$ execute \textbf{Private Distribution} (Protocol \ref{algm:SecDis})

}

\caption{Private Energy Market (PEM)}\label{algm:pem}
\end{algorithm}

\DecMargin{1em}

\subsection{Initialization} \label{sec:ini}

Since secure computation in the PEM primarily utilizes the Homomorphic encryption (e.g., Paillier Cryptosystem \cite{Paillier99}), each seller/buyer locally generates its own public-private key pair and shares all their public keys. At the beginning of each trading window, each agent claims its role as buyer or seller or off the market to form the seller and buyer coalitions. If the seller coalition is empty ($E_s^t=0$), all the buyers should buy energy from the main grid with the retail electricity price.

\IncMargin{1em}

\begin{algorithm}[!ht]
\small

Randomly choose $H_{r1} \in \Phi_s^t$ with key pair $(pk_{r1}, sk_{r1})$

\For{each $H_j\in \Phi_b^t$}
{
$H_j$ generates random nonce $r_j$ and initializes $C=1$

$H_j$ computes $C \leftarrow C*Enc_{pk_{r1}}(|sn_j^t|+r_j)$
}

The last agent in $\Phi_b^t$ sends $C$ to $\Phi_s^t\setminus H_{r1}$ 

\For{each $H_i \in \Phi_s^t$}{
$H_i$ generates random nonce $r_i$

$H_i$ computes $C \leftarrow C*Enc_{pk_{r1}}(r_i)$
}

The last agent in $\Phi_s^t$ sends $C$ to $H_{r1}$

$H_{r1}$ obtains $R_b=\sum_{H_j\in \Phi_b^t}(|sn_j^t|+r_j)+\sum_{H_i\in \Phi_s^t}r_i$ by decrypting the ciphertext with $sk_{r1}$

Randomly choose $H_{r2} \in \Phi_b^t$ with key pair $(pk_{r2}, sk_{r2})$

\textbf{Repeat} Lines 2-10 with $H_{r1} \leftarrow H_{r2}$

$H_{r2}$ obtains $R_s=\sum_{H_i\in\Phi_s^t}(sn_i^t+r_i)+\sum_{H_j\in \Phi_b^t}r_j$ by decrypting the ciphertext with $sk_{r2}$

$H_{r1}, H_{r2}$ execute secure comparison with input $R_b, R_s$

\If{$R_s < R_b$}{
    \KwRet general market}
\Else{
    \KwRet extreme market}

	\caption{Private Market Evaluation}\label{algm:SecCoop}
\end{algorithm}

\DecMargin{1em}

\subsection{Private Market Evaluation}\label{sec:seccoop}
After the initialization, PEM determines the market to be a general or extreme market, where the seller coalition $\Phi_s^t$ and buyer coalition $\Phi_b^t$ jointly aggregate their private net energy, and then compare the overall supply $E_s^t$ and demand $E_b^t$. Specifically, there are ``two rounds'' of aggregations. In the first round, an arbitrary seller $H_{r1}$ will be chosen, then each buyer $H_j\in \Phi_b^t$ encrypts the demand $|sn_j^t|$ plus a random nonce $r_j$ using $H_{r1}$'s public key $pk_{r1}$, i.e., $\small Enc_{pk_{r1}}(|sn_j^t|+r_j)$ for summing up $ \small \sum_{H_j\in \Phi_b^t}Enc_{pk_{r1}}(|sn_j^t|+r_j)$. The ciphertext will be sent to one random seller in the seller coalition except $H_{r1}$. Similarly, each seller $H_i$ of the seller coalition generates a nonce $r_i$ and encrypts it for summing up the value in the ciphertext. Finally, $H_{r1}$ decrypts the ciphertext to get the aggregated value $R_b=\sum_{H_j\in \Phi_b^t}(|sn_j|+r_j)+\sum_{H_i\in \Phi_s^t}r_i$ with its private key (see Lines 2-10 in Protocol \ref{algm:SecCoop}). The second round is similar to the first: one random selected buyer $H_{r2}$'s public key $pk_{r2}$ is used to aggregate the $sn_i+r_i$ of each seller $H_i\in \Phi_s^t$ and the nonce $r_j$ of each buyer to get $R_s=\sum_{H_i\in \Phi_s^t}(sn_i+r_i)+\sum_{H_j\in \Phi_b^t}r_j$ (see Lines 11-13).

As shown in Protocol \ref{algm:SecCoop}, neither the chosen seller $H_{r1}$ nor buyer $H_{r2}$ knows the value of $E_b^t$ or $E_s^t$ and they only obtain the aggregated random value ($R_b$ or $R_s$). Furthermore, Private Market Evaluation securely compares $R_b$ and $R_s$ by $H_{r1}$ and $H_{r2}$ to determine the market case using the garbled circuits (e.g., the \textsc{Fairplay} system \cite{fairplaymp}, see Lines 14-18). Note that the comparison result of $R_b$ and $R_s$ is equivalent to the comparison result of $E_b^t$ or $E_s^t$ since the same sum of random nonces are added to $E_b^t$ and $E_s^t$ to obtain $R_b$ and $R_s$.

\subsection{Private Pricing}
\label{sec:secpri}

If general market is returned in Protocol \ref{algm:SecCoop}, Private Pricing will be executed to find the optimal price $p^{t*}$ of the Stackelberg Equilibrium. Specifically, a seller $H_b$ will be chosen at random to securely aggregate two local values of each seller $H_i \in \Phi_s^t$: (1) $k_i^t$, and (2) $g_i^t+1+\epsilon_i^tb_i^t-b_i^t$ (locally computed). Then $H_b$ derives the optimal price $p^{t\ast}$ per the Eq. \ref{eq:optimal} in Section \ref{sec:stackel} once getting $\widehat{p^t}$, and broadcasts the optimal price $p^{t\ast}$ (see Lines 8-9 in Protocol \ref{algm:SecPr}).

\IncMargin{1em}

\begin{algorithm}[!ht]
\small

Choose randomly $H_{b} \in \Phi_b^t$ with key pair $(pk_{b}, sk_{b})$

\For{each $H_i\in\Phi_s^t$}{ $\prod^i_{s=1}Enc_{pk_b}(k_s^t) \leftarrow \prod^{i-1}_{s=1}Enc_{pk_b}(k_i^t)*Enc_{pk_{b}}(k_i^t)$}

The last agent sends $\prod^{|\Phi_s^t|}_{i=1}Enc_{pk_b}(k_i^t)$ to $H_{b}$

$H_{b}$ decrypts $\prod^{|\Phi_s^t|}_{i=1}Enc_{pk_b}(k_i^t)$ using $sk_b$ for $\sum_{H_i\in \Phi_s^t}k_i^t$

\textbf{Repeat} Lines 2-5 with $k_i^t\leftarrow g_i^t+1+\epsilon_i^tb_i^t-b_i^t$ 

$H_{b}$ decrypts the ciphertext to obtain $\sum_{H_i\in \Phi_s^t}(g_i^t+1+\epsilon_i^tb_i^t-b_i^t)$

$H_b$ calculates:  $\widehat{p^t}=\sqrt{\frac{ps_g^t\sum_{H_i\in \Phi_s^t}k_i^t}{\sum_{H_i\in \Phi_s^t}(g_i^t+1+\epsilon_i^tb_i^t-b_i^t)}}$

$H_b$ derives $p^{t\ast}$ per the Eq. \ref{eq:optimal} and broadcasts it
\caption{Private Pricing}\label{algm:SecPr}
\end{algorithm}

\DecMargin{1em}

\subsection{Private Distribution}\label{subsec:distri}

As discussed in Section \ref{sec:dist}, the trading amount of electricity between each pair of seller and buyer should be allocated in proportion to its demand/supply ratio out of the market demand/supply in both general and extreme market case. W.l.o.g., we discuss the protocol for the general market (which can be simply extended for the extreme market). For buyer $H_j$, the allocated amount of electricity from seller $H_i$ should be $e_{ij}=\frac{|sn_j^t|}{E_b^t}*sn_i^t$. Since any buyer may intend to cheat by using a larger demand $sn_i^t$ to increase its share in the allocation (reduce costs with a lower price to buy energy), the market demand \emph{cannot be directly disclosed to the buyers}. The seller coalition cannot get the market demand considering the \emph{privacy} and \emph{fairness}. Since the homomorphic encryption schemes (e.g., Paillier Cryptosystem \cite{Paillier99}) only obtain additive and/or multiplicative property (not fully homomorphic \cite{Gentry09} to securely compute ``division''), we cannot directly adopt homomorphic encryption for privately computing the pairwise allocated amounts using their input ratios.

To address such issue, we transform the ciphertext computation for the ``division/ratio'' $\frac{|sn_j^t|}{E_b^t}$ in $e_{ij}$. Specifically, each buyer $H_j$ locally computes $Enc_{pk_s}(\frac{E_b^t}{|sn_j^t|})$ with $\frac{1}{|sn_j^t|}$. Note that $\frac{1}{|sn_j^t|}$ should be multiplied by an integer $k$ to be converted to an integer. Then $Enc_{pk_s}(\frac{E_b^t}{|sn_j^t|})$ and $k$ will be sent to the seller $H_s$, and $H_s$ decrypts it to get the allocation ratio via $(\frac{E_b^t}{|sn_j^t|})^{-1}=\frac{|sn_j^t|}{E_b^t}$. The only information that the seller $H_s$ knows is the allocation ratio for buyer coalition (while $E_b^t$ and $|sn_j^t|$ are unknown). Thus, the seller $H_s$ can broadcast the allocation ratio in the seller coalition. Finally, each seller $H_i$ calculates the allocated amount of energy $e_{ij}$ and routes the energy to each buyer $H_j$; $H_j$ pays $m_{ji}$ to $H_i$ (see Lines 10-12 in Protocol \ref{algm:SecDis}). Similarly, in an extreme market, the protocol can be implemented by swapping their roles. Figure \ref{fig:pridis} illustrates the major procedures of Private Distribution (note that the random seller $H_s$ is chosen as $H_1^{+}$).

\begin{figure}[!h]
 	\centering
 		\includegraphics[angle=0, width=1\linewidth]{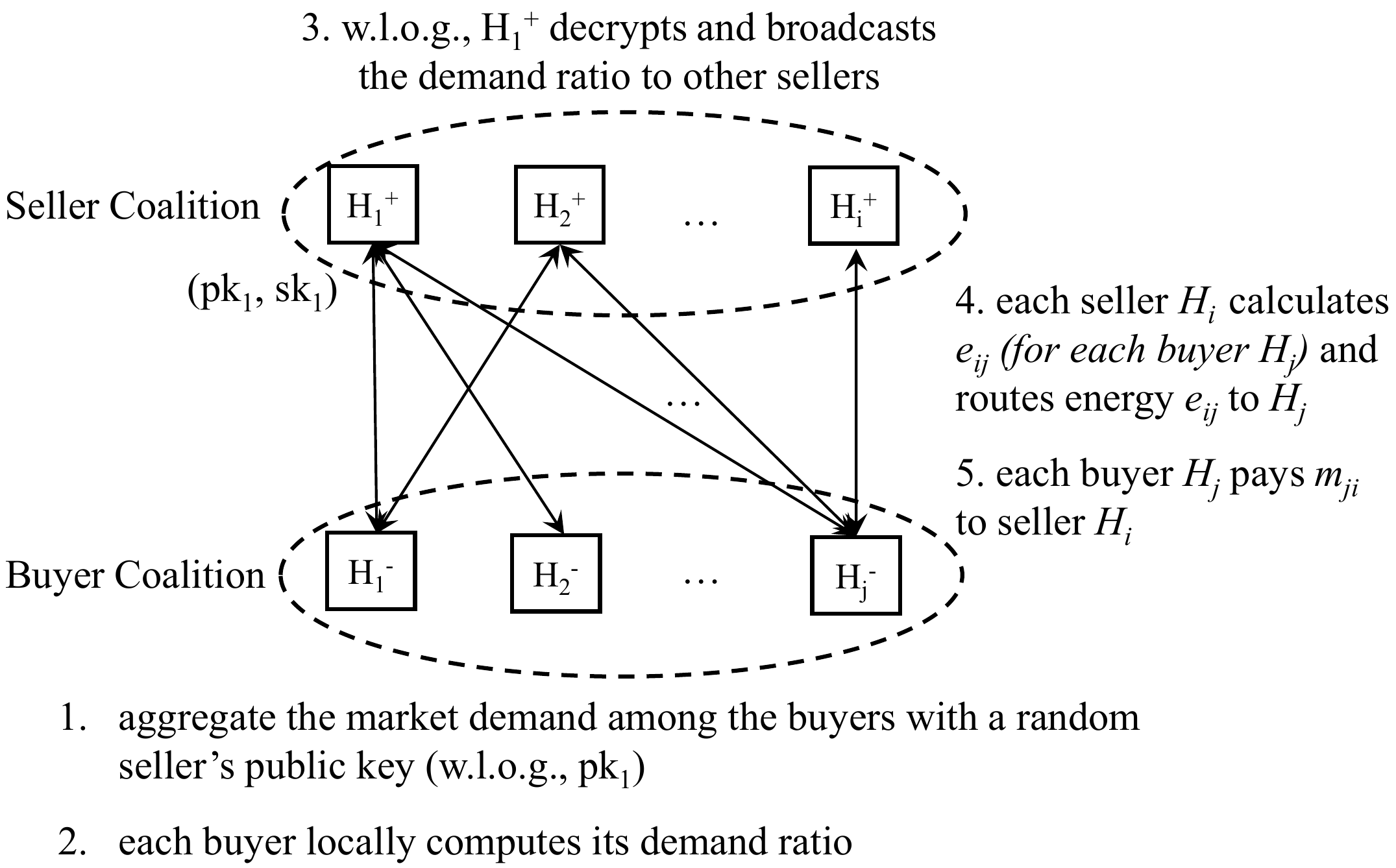}
 	\caption[Optional caption for list of figures]
 	{Private Distribution for General Market (which can be adapted for extreme market by swapping the roles of two coalitions: each buyer $H_j$ calculates $e_{ij}$ and pays $m_{ji}$)}
 	\label{fig:pridis}
\end{figure}

\IncMargin{1em}

\begin{algorithm}[!ht]
\small

\If{general market}{

Randomly choose $H_s \in \Phi_s^t$ with key pair $(pk_s, sk_s)$

\For{each $H_j\in \Phi_b^t$}{
$H_j$ computes $\prod^j_{s=1}Enc_{pk_s}(|sn_s^t|) \leftarrow \prod^{j-1}_{s=1}Enc_{pk_s}(|sn_j^t|)*Enc_{pk_s}(|sn_j^t|)$
}
The last agent broadcasts $\prod^{|\Phi_b^t|}_{j=1}Enc_{pk_s}(|sn_j^t|)$ in $\Phi_b^t$

\For{each $H_j\in \Phi_b^t$}{
$H_j$ computes and sends $\prod^{|\Phi_b^t|}_{j=1}Enc_{pk_s}(|sn_j^t|)^{\frac{1}{|sn_j^t|}}$ to $H_s$
}
$H_s$ decrypts the ciphertexts and broadcasts the allocation ratio within the seller coalition $\Phi_s^t$ 

\Repeat{each $H_i \in\Phi_s^t$ finishes transaction}{
$H_i$ computes $e_{ij}= \frac{|sn_j^t|}{E_b^t}*sn_i^t$

$H_i$ routes $e_{ij}$ to $H_j$ 

$H_j$ pays $m_{ji}=e_{ij}*p^t$ to $H_i$

}

}

\Else{
\textbf{Repeat} Lines 2-13 by replacing $\Phi_s^t$ with $\Phi_b^t$

\KwRet $e_{ij}= \frac{sn_i^t}{E_s^t}*|sn_j^t|$ and $m_{ji}= e_{ij}*p^t$
}
    \caption{Private Distribution}\label{algm:SecDis}
\end{algorithm}

\DecMargin{1em}

\section{Analysis}
\label{sec:analysis}

In this section, we give theoretical analysis for privacy, incentives, and the complexity in our PEM framework.

\subsection{Security/Privacy Analysis}
We now prove the security/privacy for the protocols in our PEM framework under the theory of secure multiparty computation \cite{Yao86,Goldreich87}, which requires each party to simulate all its received messages with only its input and output in polynomial time (``\emph{Computational Indistinguishability}'') \cite{Goldreichenc}. The PEM framework executes Private Market Evaluation, Private Distribution and possibly Private Pricing in each trading window. Then, we first examine the security of the three protocols and then discuss the composition \cite{Goldreichenc}.
\begin{lemma} 
The Private Market Evaluation (Protocol \ref{algm:SecCoop}) does not reveal any private information. 
\end{lemma}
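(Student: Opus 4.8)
\emph{The plan is to} prove privacy in the simulation paradigm for the semi-honest model: for every party I will exhibit a probabilistic polynomial-time simulator that, given only that party's input and its prescribed output (the market-case bit), produces a transcript computationally indistinguishable from its real view. By the sequential modular composition theorem for semi-honest protocols, it suffices to treat the two homomorphic aggregation rounds and the garbled-circuit comparison as separately secure sub-steps and then compose them. First I would partition the agents by the messages they actually receive: the ``passive'' agents (every $H_i, H_j$ other than the chosen decryptors), the two decryptors $H_{r1}$ and $H_{r2}$, and the two comparison participants (which are again $H_{r1}, H_{r2}$).

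For the passive agents the argument is immediate. Their entire view consists of the running ciphertext $C$ that they receive, homomorphically update, and forward; no party outside the key holder ever decrypts. Hence I would simulate each such $C$ by a fresh Paillier encryption of an arbitrary plaintext (e.g. $0$) under $pk_{r1}$ or $pk_{r2}$, and indistinguishability follows directly from the semantic (IND-CPA) security of the Paillier cryptosystem. No private input $sn_i^t$ or $sn_j^t$ can be extracted, since it appears only inside an unopened ciphertext.

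The substantive step concerns the decryptors. Here $H_{r1}$ learns $R_b=\sum_{H_j\in\Phi_b^t}(|sn_j^t|+r_j)+\sum_{H_i\in\Phi_s^t}r_i = E_b^t + S$ and $H_{r2}$ learns $R_s = E_s^t + S$, where $S$ is the total nonce sum. The crux of the privacy claim is that each decryptor holds only its own nonce, so the remaining nonces act as an additive one-time blinding of the true aggregate. I would formalize this as a sub-lemma: conditioned on $H_{r1}$'s own input and nonce, and provided at least one honest agent contributes a nonce drawn uniformly from a sufficiently large domain, the value $R_b$ is (statistically, or computationally over $\mathbb{Z}_N$) independent of $E_b^t$; symmetrically for $R_s$ and $E_s^t$. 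The simulator therefore samples $R_b$ (resp. $R_s$) uniformly and hands the decryptor a ciphertext encrypting that value under its own key, which is consistent because the decryptor holds $sk_{r1}$ (resp. $sk_{r2}$).

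\emph{The hard part will be} reconciling this correctness-by-shared-masking with per-party privacy: although $R_b-R_s=E_b^t-E_s^t$ holds as a global identity (which is precisely what makes comparing $R_b,R_s$ equivalent to comparing $E_b^t,E_s^t$), I must argue that no individual party ever simultaneously obtains both values, so the identity leaks nothing, and that the joint distribution of $(R_b,\ \text{output})$ seen by $H_{r1}$ alone (and dually for $H_{r2}$) is reproducible from its own input, since any sampled $R_b$ is consistent with any market-case bit once $R_s$ is hidden. Finally I would invoke the standard simulation-security of Yao's garbled-circuit comparison (the \textsc{Fairplay} realization): given inputs $R_b, R_s$ and the single output bit $R_s<R_b$, its simulator yields the comparison transcript, revealing only that bit, which equals the protocol's public output. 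Composing the ciphertext simulation, the blinding sub-lemma, and the comparison simulator via sequential composition then shows that Private Market Evaluation reveals nothing beyond the market-case bit, completing the proof.
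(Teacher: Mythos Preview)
Your proposal is correct and follows essentially the same approach as the paper: partition the parties into the passive agents (whose views are ciphertexts, simulated via semantic security of Paillier), the two decryptors $H_{r1},H_{r2}$ (whose decrypted values $R_b,R_s$ are simulated as uniform over the field thanks to the honest nonces), and the Fairplay comparison (invoked as a secure black box), then compose. Your treatment is in fact more careful than the paper's---you explicitly handle the global identity $R_b-R_s=E_b^t-E_s^t$ and argue it leaks nothing because no single semi-honest party obtains both values, a point the paper's own proof does not address.
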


\begin{proof}
Three different types of parties are involved in Protocol \ref{algm:SecCoop}: a randomly selected seller $H_{r1}$, a randomly selected buyer $H_{r2}$, and the remaining sellers/buyers.

We first examine the received messages of the remaining sellers/buyers. Each of them only receives a ciphertext of a random number (which cannot be decrypted without the private key), which can be polynomially simulated by repeating the encryption with the public key. Thus, the protocol does not reveal private information to them.

$H_{r1}$ and $H_{r2}$ can decrypt the ciphertexts to obtain two different random numbers $R_b$ and $R_s$ (which are decrypted with the private keys), respectively. Each random number can be polynomially simulated by generating a random number from the uniform probability distribution over range $\mathcal{F}$. Notice that the random numbers are scaled to fixed precision over a closed field (after decryption), enabling such a selection. Thus, $Pr[\sum_{i=1}^nR_b~is~simulated] = Pr[\sum_{i=1}^nR_s~is~simulated]=\frac{1}{\mathcal{F}}$. 
Finally, $H_{r1}$ and $H_{r2}$ also securely execute Fairplay to compare two random numbers for market evaluation, which does not reveal any private information (as proven in \cite{MalkhiNPS04}). 
\end{proof}

\begin{lemma} 
The Private Pricing (Protocol \ref{algm:SecPr}) only reveals non-private information $\small {\sum_{H_i\in \Phi_s^t}k_i^t}$ and $\sum_{H_i\in \Phi_s^t}(g_i^t+1+\epsilon_i^tb_i^t-b_i^t)$ to a randomly selected buyer $H_b$.

\label{lem:secpri}
\end{lemma}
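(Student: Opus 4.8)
The plan is to follow the same simulation-based paradigm used for Protocol~\ref{algm:SecCoop}, establishing security in the semi-honest model via \emph{computational indistinguishability}: for each party I exhibit a polynomial-time simulator that, given only that party's input and its prescribed output, produces a transcript computationally indistinguishable from the party's real view. I would partition the participants of Protocol~\ref{algm:SecPr} into three groups: (i) the intermediate sellers $H_i\in\Phi_s^t$ that relay and homomorphically accumulate the ciphertexts, (ii) the randomly chosen buyer $H_b$ that finally decrypts, and (iii) the remaining buyers, who only receive the broadcast price $p^{t\ast}$.

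For group (i), each seller's incoming message is a single ciphertext $\prod_{s=1}^{i-1}Enc_{pk_b}(k_s^t)$ (and, in the second pass, the analogous product over $g_i^t+1+\epsilon_i^tb_i^t-b_i^t$) encrypted under $H_b$'s public key $pk_b$, for which none of the sellers holds the secret key $sk_b$. By the semantic security of the Paillier cryptosystem, such a ciphertext is indistinguishable from an encryption of any fixed plaintext, so a simulator can produce it by encrypting $0$ under $pk_b$. Hence no private information leaks to the relaying sellers, and the same argument covers group (iii), whose only received value $p^{t\ast}$ is the intended public output computed deterministically from the two aggregates via Eq.~\ref{eq:optimal}.

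The crux is group (ii), the decrypting buyer $H_b$. Its view consists of the two final aggregated ciphertexts forwarded by the last seller together with the plaintexts it recovers with $sk_b$, namely $\sum_{H_i\in\Phi_s^t}k_i^t$ and $\sum_{H_i\in\Phi_s^t}(g_i^t+1+\epsilon_i^tb_i^t-b_i^t)$. These two sums are exactly the output the lemma permits $H_b$ to learn, so the simulator, given them, re-encrypts them under $pk_b$ and outputs the resulting ciphertexts; indistinguishability from the real transcript again follows from semantic security and ciphertext re-randomization. The key structural observation, and the step I expect to be the main obstacle, is that $H_b$ never receives any \emph{partial} product: the per-seller accumulation in Lines~2--3 is routed \emph{among the sellers}, and only the completed aggregate $\prod_{i=1}^{|\Phi_s^t|}Enc_{pk_b}(k_i^t)$ is sent to $H_b$. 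Were an intermediate ciphertext to reach $H_b$, it could decrypt a partial sum and thereby learn more than the two permitted aggregates, so the argument must carefully verify that the message flow exposes to $H_b$ only the terminal ciphertext of each pass. This contrasts with Protocol~\ref{algm:SecCoop}, where additive nonces were required; here no masking is needed precisely because the two sums are themselves the sanctioned output, while the individual summands stay hidden inside the additive homomorphism of the aggregate. Composing the three cases then yields that the only information disclosed to $H_b$ is the stated pair of sums.
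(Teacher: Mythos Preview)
Your proposal is correct and follows essentially the same simulation-based approach as the paper: both arguments rely on the semantic security of the Paillier scheme to simulate the sellers' ciphertext views, and both observe that $H_b$ learns only the two aggregated sums upon decryption. Your treatment is in fact more thorough—you explicitly construct the simulators (encrypting $0$, re-encrypting the permitted sums), you add a third group for the remaining buyers receiving the broadcast $p^{t\ast}$, and you articulate why no partial products reach $H_b$—whereas the paper's proof is terser, simply asserting that the sellers' two received ciphertexts are polynomially simulable and that $H_b$ cannot recover individual $k_i^t,g_i^t,b_i^t,\epsilon_i^t$ from the aggregates.
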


\begin{proof}
This Protocol involves two different types of parties: a randomly selected buyer $H_b$ and all the sellers. 

We first analyze $H_b$'s received messages. $H_b$ can decrypt the received ciphertexts with its private key to obtain $\sum_{H_i\in \Phi_s^t}k_i^t$ and $\sum_{H_i\in \Phi_s^t}(g_i^t+1+\epsilon_i^tb_i^t-b_i^t)$. Although such two aggregated values are revealed to $H_b$, $H_b$ cannot learn any seller's private data, e.g., $k_i^t, g_i^t, b_i^t$, $\epsilon_i^t$ from the aggregated results. 

On the other hand, all the sellers receive only two ciphertexts and cannot decrypt them without the private key. Since each seller can polynomially simulate its received two ciphertexts using the public key (by repeating the encryption), we can claim that the protocol does not reveal any information to the sellers.
\end{proof}

\begin{lemma} 
The Private Distribution (Protocol \ref{algm:SecDis}) only reveals the non-private market demand ratios $\frac{E_b^t}{|sn^t_j|}, H_j\in \Phi_b^t$ to the seller coalition (in the general market), or the non-private market supply ratios $\frac{E_s^t}{|sn^t_i|}, H_i\in \Phi_s^t$ to the buyer coalition (in the extreme market).
\label{lemma:pd}

\end{lemma}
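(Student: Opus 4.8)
The plan is to follow the simulation paradigm already used for the two preceding security lemmas, treating the general market first and obtaining the extreme-market claim for free by the role-swapping symmetry noted in Protocol~\ref{algm:SecDis} (Lines 14--15). I would partition the participants into three roles: the randomly designated seller $H_s$ holding $(pk_s,sk_s)$, the remaining sellers $H_i\in\Phi_s^t\setminus H_s$, and all buyers $H_j\in\Phi_b^t$. For each role I would exhibit a polynomial-time simulator that, given only that party's input and the declared output, produces a transcript computationally indistinguishable from its real view, which establishes that nothing beyond the stated ratios leaks.

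First I would handle the buyers. During the running-product loop each buyer receives a partial product of encrypted demands, and after the broadcast it receives the full ciphertext $Enc_{pk_s}(E_b^t)$; raising this to $1/|sn_j^t|$ (with the integer scaling $k$) to obtain $Enc_{pk_s}(E_b^t/|sn_j^t|)$ is a purely local homomorphic operation. Since every message a buyer observes is a ciphertext under $pk_s$ and no buyer holds $sk_s$, semantic security of the Paillier scheme lets the simulator replace each such ciphertext by an encryption of an arbitrary fixed plaintext under $pk_s$; the resulting view is computationally indistinguishable from the real one, so the buyers learn nothing beyond their own inputs.

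Next I would handle $H_s$ and then the remaining sellers. $H_s$ decrypts each ciphertext it receives to exactly the ratio $E_b^t/|sn_j^t|$ (equivalently its reciprocal, the allocation share $|sn_j^t|/E_b^t$), which is precisely the declared output; hence the simulator, given these ratios, reconstructs the decrypted values verbatim and simulates the received ciphertexts by re-encrypting the (scaled) ratios under $pk_s$. The remaining sellers receive only the broadcast allocation ratios, which are likewise their output, so their view is simulated trivially. Together these show that the sole information leaving the protocol is the collection $\{E_b^t/|sn_j^t|\}_{H_j\in\Phi_b^t}$.

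The step I expect to be the main obstacle is justifying that these ratios are genuinely \emph{non-private}, i.e., that revealing them exposes no absolute private quantity. Here I would argue that the ratios are invariant under a common rescaling of all $|sn_j^t|$ and $E_b^t$: the shares satisfy $\sum_{H_j}|sn_j^t|/E_b^t=1$ and fix only the relative demand distribution, so infinitely many assignments of the private values $E_b^t$, $|sn_j^t|$, and $sn_i^t$ remain consistent with any observed set of ratios. Thus no absolute demand, supply, or per-agent net energy can be recovered, and the ratios are exactly the minimal information the allocation step must disclose. The extreme-market claim for the supply ratios $E_s^t/|sn_i^t|$ then follows verbatim by interchanging the roles of $\Phi_s^t$ and $\Phi_b^t$.
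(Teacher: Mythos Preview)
Your proposal is correct and follows essentially the same approach as the paper: partition the parties by role, invoke the simulation paradigm used in the preceding lemmas, and observe that only the demand (resp.\ supply) ratios are learned while ciphertexts under $pk_s$ leak nothing by semantic security. The paper's own proof is in fact a one-line appeal to Lemma~\ref{lem:secpri} plus the bare assertion that ``they cannot learn any supply or demand from the ratios''; your write-up is considerably more detailed, and your rescaling-invariance argument for why the ratios are non-private is an explicit justification that the paper leaves unstated.
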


\begin{proof}
Similar to the proof in Lemma \ref{lem:secpri}, we can prove that the seller coalition can only receive the demand ratios $\frac{E_b^t}{|sn^t_j|}, H_j\in \Phi_b^t$ (from the buyer coalition) in general market. Moreover, buyer coalition can only receive the supply ratios $\frac{E_s^t}{|sn^t_i|}, H_i\in \Phi_s^t$ (from the seller coalition) in extreme market. However, they cannot learn any supply or demand from the ratios in these two cases. 
\end{proof}

\begin{theorem} 
The PEM framework only reveals the aggregated information articulated in Lemma \ref{lem:secpri} and \ref{lemma:pd}.
\end{theorem}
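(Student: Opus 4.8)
The plan is to establish this claim as a \emph{composition} result, using the sequential modular composition theorem for secure multiparty computation in the semi-honest model \cite{Goldreichenc}. The key observation is that, within each trading window, the PEM framework (Protocol \ref{algm:pem}) is precisely the sequential composition of the three sub-protocols already analyzed: Private Market Evaluation is always invoked first, Private Pricing is invoked only on the general-market branch, and Private Distribution is invoked last. Since each component has been shown to be individually secure, the task reduces to arguing that composing them leaks nothing beyond the union of their individual leakages.

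First I would recall and classify the three component guarantees. Private Market Evaluation reveals no private information: its only output is the one-bit market case obtained by comparing the two randomized aggregates $R_b$ and $R_s$ through Fairplay, which is a public, non-private outcome. Private Pricing reveals only the two aggregates $\sum_{H_i\in\Phi_s^t}k_i^t$ and $\sum_{H_i\in\Phi_s^t}(g_i^t+1+\epsilon_i^t b_i^t-b_i^t)$ to the chosen buyer $H_b$ (Lemma \ref{lem:secpri}), and Private Distribution reveals only the demand ratios (general market) or supply ratios (extreme market) to the opposite coalition (Lemma \ref{lemma:pd}). I would stress that the market case, being public, is available to every party's simulator as auxiliary input, so the conditional branching between the general and extreme cases introduces no additional secret.

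Next I would construct a simulator $S$ for each party's view of the entire framework by stitching together the three component simulators $S_1, S_2, S_3$ supplied by the individual proofs. Given a party's input and its prescribed output---empty for Private Market Evaluation, the two aggregates for the chosen $H_b$, and the relevant ratio for the receiving coalition---$S$ runs $S_1$ to produce the first-phase view, feeds the simulated market case forward, then runs $S_2$ (only on the general branch) and $S_3$ on the appropriate inputs and revealed values, concatenating the outputs. A standard hybrid argument over the three phases, replacing each real sub-view by its simulation one phase at a time, then yields computational indistinguishability of the composed real view from the composed simulated view, invoking the indistinguishability established in each component lemma at the corresponding hybrid step.

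The main obstacle is not the mechanics of the stitching but arguing that the \emph{combination} of the quantities revealed across phases still leaks nothing about any single agent's private data; for instance, a party that learns the aggregates of Lemma \ref{lem:secpri} and, in a later role, a ratio of Lemma \ref{lemma:pd} must not be able to invert these to recover an individual $k_i^t$, $sn_i^t$, or $|sn_j^t|$. I would address this by noting that each revealed value is computed under freshly randomized Paillier encryptions and an independently chosen decrypting agent, so that, conditioned on the party's own input, every revealed value is simulatable from the uniform distribution over the fixed-precision field $\mathcal{F}$ exactly as in the individual proofs. The composition theorem then produces a single simulator for the joint view, certifying that the framework's total leakage is exactly the union of the quantities named in Lemmas \ref{lem:secpri} and \ref{lemma:pd} and nothing more. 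Care must additionally be taken that an agent occupying different roles across windows accumulates no cross-window information, which holds because fresh nonces and a freshly selected decrypting agent are drawn in every window, rendering the per-window views independent.
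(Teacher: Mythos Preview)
Your proposal is correct and follows essentially the same approach as the paper: both argue that Private Market Evaluation's only output (the one-bit market case) is polynomially simulatable, and then invoke the sequential composition theorem for semi-honest MPC \cite{Goldreichenc} to conclude that the overall leakage is exactly the union of what Lemmas \ref{lem:secpri} and \ref{lemma:pd} identify. Your write-up is considerably more detailed than the paper's two-sentence proof---explicitly constructing the stitched simulator, sketching the hybrid argument, and addressing cross-phase and cross-window accumulation---but the underlying strategy is identical.
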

\begin{proof}
Since Private Market Evaluation does not reveal any privacy where the secure comparison result (either 0 or 1) can be polynomially simulated, PEM only reveals some trivial information articulated in Lemma \ref{lem:secpri} and \ref{lemma:pd} per the composition theory of secure multiparty computation \cite{Goldreichenc}. 
\end{proof}

\subsection{Incentive Analysis}

\begin{theorem} 
The PEM framework ensures individual rationality and incentive compatibility.
\label{th:incentive}
\end{theorem}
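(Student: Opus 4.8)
The plan is to establish the two properties separately, since they rest on different parts of the construction: \emph{individual rationality} follows almost entirely from the admissible price range in Eq.~\ref{eq:price}, while \emph{incentive compatibility} follows from the uniqueness of the Stackelberg equilibrium (Lemma~\ref{th:se}) together with a best-response argument against data misreporting.

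First I would prove individual rationality by comparing each agent's payoff inside the PEM against its outside option of trading only with the main grid. For a seller $H_i \in \Phi_s^t$, participation sells its net energy $sn_i^t$ at the cleared price $p^{t\ast}$, whereas the only alternative is the grid buy-back price $pb_g^t$; since Eq.~\ref{eq:price} guarantees $p^{t\ast} \ge p_l > pb_g^t$, the selling term in Eq.~\ref{eq:utility} is strictly larger under PEM, and because the local-consumption term is unchanged the seller's total utility strictly increases. Symmetrically, for a buyer $H_j \in \Phi_b^t$ the traded portion is purchased at $p^{t\ast} \le p_h < ps_g^t$ rather than at the retail price $ps_g^t$, so the cost in Eq.~\ref{eq:cost} strictly decreases. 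Hence every participant is better off on its traded quantity, which establishes individual rationality directly from the imposed price interval.

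Next I would prove incentive compatibility in two layers. For the \emph{strategy} itself, I would invoke Lemma~\ref{th:se}: given the cleared price, the seller's utility is concave in $l_i^t$ (its second derivative is negative), so the best-response load $l_i^{t\ast}$ of Eq.~\ref{eq:opt} is the unique maximizer and any unilateral change in load strictly lowers utility; dually, the leader's cost $\Gamma^t$ is strictly convex in $p^t$, so no buyer-side deviation from $p^{t\ast}$ reduces cost. For \emph{data misreporting}, I would substitute the best-response load back into Eq.~\ref{eq:utility} to obtain a reduced-form payoff as a function of the agent's reported parameters (e.g.\ $k_i^t$ and $g_i^t+1+\epsilon_i^t b_i^t-b_i^t$), and then show that truthful reporting is the stationary point that maximizes this reduced-form payoff, so that misreporting cannot be profitable.

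The hard part will be the data-misreporting layer, because a reported parameter enters the reduced-form payoff twice: directly, through the agent's own best response, and indirectly, through the aggregate that fixes $p^{t\ast}$ in Eq.~\ref{eq:optimal}. A seller that overstates $k_i^t$ distorts its own load away from the true optimum (which lowers utility by concavity) while also nudging the price upward (which raises utility), so I must show the first effect dominates---I expect to argue this either via a first-order/envelope argument that cancels the own-response term at truth-telling, or via the price-taking property that a single agent's marginal contribution to the aggregate is negligible in a market with many agents. I would then close by noting that the Private Distribution protocol (Protocol~\ref{algm:SecDis}) withholds $E_b^t$ from the buyers, so a buyer cannot compute the demand inflation needed to enlarge its allocation share, removing the remaining manipulation channel and completing incentive compatibility.
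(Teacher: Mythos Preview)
Your individual-rationality argument is essentially the paper's: both simply compare $p^{t\ast}$ against the grid prices via Eq.~\ref{eq:price}. The first layer of your incentive-compatibility argument (no profitable unilateral deviation in $l_i^t$ or $p^t$) also matches the paper, which invokes the uniqueness in Lemma~\ref{th:se} for exactly this purpose. Two differences deserve comment.

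First, the paper treats the extreme market separately in both halves of the proof; you should do the same, since there the price is pinned at $p_l$ and the reasoning is different (a seller who inflates supply only pushes the market further into the extreme regime without raising the price, and buyers already cover their full demand at $p_l<ps_g^t$).

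Second, and more importantly, your ``data-misreporting layer'' goes beyond what the paper actually proves and, as you have sketched it, will not go through. The paper's incentive compatibility is the Stackelberg-equilibrium property: given the cleared price, a seller cannot gain by changing its \emph{load profile} $l_i^t$, and given the sellers' best responses, the buyer coalition cannot gain by changing $p^t$. It does \emph{not} establish truthfulness of reported parameters such as $k_i^t$. In fact your envelope intuition points the wrong way: substituting $l_i^{t\ast}$ into Eq.~\ref{eq:utility} and differentiating in $p^t$ gives $\partial U_i^{t\ast}/\partial p^t = (g_i^t+1+\epsilon_i^t b_i^t-b_i^t)-k_i^t/p^t = sn_i^t>0$, so a seller strictly prefers a higher price; since the own-response term vanishes at the optimum by the envelope theorem, the first-order effect of overstating $k_i^t$ in Eq.~\ref{eq:optimal_p} is \emph{positive}, not zero. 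Your fallback (price-taking in a large market) yields only approximate truthfulness, not the exact property the theorem asserts. I would therefore drop the misreporting layer, or recast it as a remark, and confine the proof to the equilibrium-deviation argument plus the extreme-market case, which is all the paper claims and all that Lemma~\ref{th:se} supports.
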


\begin{proof}
 
We first evaluate the individual rationality. In the general market, if each buyer $H_j$ directly purchases energy from the main grid at the price $ps_g^t$, which is greater than $p^{t*} \in[p_l, p_h]$, the cost will increase; if each seller $H_i$ directly sells the energy to the grid at the price $pb_g^t$, which is less than the $p^{t*}$: the payoff will decrease. In the extreme market, the buyer can buy the energy from the PEM with a lower price ($p_l<ps_g^t$) and the seller can still sell the energy with a higher price ($p_l>pb_g^t$), both of which receive more payoffs. This proves the individual rationality. 

Second, we discuss the incentive compatibility for two different markets: for the general market, we assume that there exists one seller $H_i\in \Phi_s^t$ which untruthfully utilizes its net energy ${sn_i^t}'$ by adjusting its load profile to ${l_i^t}'$. Per Lemma \ref{th:se}, there exists only one load profile $l_i^{t\ast}$ to reach the equilibrium and return the optimal price $p^{t*}$. Then, it is impossible to find another ${l_i^t}' \neq l_i^{t\ast}$ since $p^{t*}$ is derived only if all the sellers hold the $l_i^{t\ast}, H_i\in \Phi_s^t$ profile. On the contrary, as all the sellers hold the optimal load profile, the buyers cannot reduce the total costs by decreasing market price.

In addition, for the extreme market, the buyers purchase all the energy from the PEM with a lower price $p_l<ps_g^t$, then rational buyers cannot gain more payoff with untruthful participation (since the payment cannot be lower). For any rational seller $H_i$, if $H_i$ untruthfully utilizes a higher supply to increase its allocated amount of sold energy, the market price would be reduced (no additional payoff, either). This proves the incentive compatibility. 
\end{proof}

\subsection{Complexity Analysis}
\begin{lemma}
The complexity of protocols in the PEM is $O(n^2)$.
\end{lemma}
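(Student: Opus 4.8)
The plan is to bound the cost of each of the three constituent protocols separately and then sum, taking $n=|\Phi|$ as the number of agents and noting that $|\Phi_s^t|,|\Phi_b^t|\le n$ in every trading window. I would measure complexity by the number of cryptographic operations (encryptions, ciphertext multiplications, and decryptions under the Paillier scheme) together with the number of pairwise transactions, treating a single homomorphic operation and a single garbled-circuit comparison on fixed-precision inputs as $O(1)$.

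First I would analyze Private Market Evaluation (Protocol~\ref{algm:SecCoop}). Each of its two aggregation rounds walks once through the buyer coalition and once through the seller coalition, performing one encryption and one ciphertext multiplication per agent, so each round costs $O(n)$ and the two decryptions are $O(1)$; the concluding secure comparison between $H_{r1}$ and $H_{r2}$ runs a single garbled circuit on two fixed-width values and is therefore $O(1)$. Hence this protocol is $O(n)$. Private Pricing (Protocol~\ref{algm:SecPr}) is analogous: it aggregates $k_i^t$ and then $g_i^t+1+\epsilon_i^tb_i^t-b_i^t$ across $\Phi_s^t$, which are two linear passes of $O(n)$ each, followed by the $O(1)$ evaluation of $\widehat{p^t}$ and $p^{t\ast}$, so it is also $O(n)$.

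The quadratic term comes from Private Distribution (Protocol~\ref{algm:SecDis}). Its homomorphic part is still linear: aggregating the encrypted demands over $\Phi_b^t$ costs $O(n)$, the per-buyer exponentiation $Enc_{pk_s}(\cdot)^{1/|sn_j^t|}$ contributes one operation per buyer for $O(n)$ total, and the single decryption plus broadcast of the allocation ratios is $O(n)$. The dominating step is the pairwise settlement: each seller $H_i\in\Phi_s^t$ computes the allocation $e_{ij}$ and routes energy and payment $m_{ji}$ to \emph{every} buyer $H_j\in\Phi_b^t$, which amounts to $|\Phi_s^t|\cdot|\Phi_b^t|=O(n^2)$ pairwise operations. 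Thus Private Distribution is $\Theta(n^2)$.

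Finally, since the framework executes Private Market Evaluation, possibly Private Pricing, and Private Distribution sequentially within each trading window, the total is $O(n)+O(n)+O(n^2)=O(n^2)$. The only real obstacle is confirming that the distribution step is the unique quadratic bottleneck: one must check that the homomorphic aggregations are genuinely linear (each agent touches the running ciphertext exactly once) and that neither the garbled-circuit comparison nor the $1/|sn_j^t|$ scaling conceals a hidden dependence on $n$, so that it is precisely the pairwise allocation over all seller--buyer pairs that determines the bound.
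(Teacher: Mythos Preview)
Your proposal is correct and follows exactly the same decomposition as the paper's proof: Private Market Evaluation and Private Pricing are each $O(n)$ (linear aggregation plus an $O(1)$ secure comparison), while Private Distribution is $O(n^2)$ due to the pairwise seller--buyer allocation, giving $O(n^2)$ overall. You have simply supplied more detail than the paper's terse proof, but the structure and conclusions are identical.
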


\begin{proof}
It is straightforward to analyze the complexity of algorithms in our PEM framework. First, Private Market Evaluation algorithm has complexity $O(n)$ -- securely aggregating random values is $O(n)$ while secure comparison is $O(1)$. Similarly, Private Pricing algorithm has complexity $O(n)$, and Private Distribution algorithm has complexity $O(n^2)$. Therefore, the complexity of the PEM framework is $O(n^2)$. 
\end{proof}

\section{Discussion}
\label{sec:discuss}
\noindent\textbf{Generalization of PEM}. PEM can be extended to Vehicle-to-Grid (V2G) applications \cite{al2011coordinating} by considering electrical vehicles as agents with local energy. Last but not least, the proposed PEM is a general framework for privacy preserving energy trading (focusing on privacy and incentive compatibility), which can be readily extended for ensuring privacy and incentive compatibility for other applications on the power grid (e.g., energy trading w.r.t. future prices, energy trading by possibly storing energy for the future, and demand response \cite{wang2013active}). Finally, PEM can also be adapted for trading other products, such as the allocation of spectrum in the cognitive radio networks \cite{LeiToN}, and the Wifi \& LTE sharing \cite{XuLTE}. 
     \vspace{0.03in}
     
    \noindent \textbf{Seller/Buyer Coalitions}. We forms coalitions for sellers and buyers in our PEM. First, the formation of coalition can enable the agents to cooperate to achieve more benefits/social welfare compared with trading directly with the monopoly, the main grid. Recall that coalitions make the market more stable for such emerging applications, e.g., ensuring the fairness among the seller/buyer coalition by allocating the amounts based on sellers/buyers' shares in the market supply/demand. Such setting would be more applicable for conservative agents. Nevertheless, it is also worth exploring the privacy preserving schemes for non-cooperative energy trading or fully competitive energy market \cite{wang2014game}, which left for future work. 
    \vspace{0.03in}
    
    \noindent \textbf{Malicious Model}. PEM is based on the semi-honest model, and each agent (rational) is also assumed to have incentives to cheat for payoffs. Our model can also be extended to defend against malicious agents, which may deviate the protocol (regardless of their payoffs) by faking the trading data, colluding with other agents, and/or performing advanced attacks. For instance, we can design verifiable and collusion-resistant schemes (e.g., detect the violation of data integrity, and prevent collusion by randomly picking agents \cite{Xie_TIFS19}). 
    \vspace{0.03in}

    \noindent \textbf{Scalability}. With the advancement of distributed computation \cite{Xie_TIFS19, MaharjanZZGB13}, secure computation \cite{Yao86,Goldreich87} can be applied to perform complex computation on the smart grid. Each distributed agent (e.g., a smart home) can also locally compute the data, such that the computational load of whole system can be greatly reduced. As shown in the experimental settings in Section \ref{subsec:expset}, we take advantage of the container technology, e.g., Docker, to emulate local computing agents for different smart homes in the PEM. High efficiency and scalability of PEM have been demonstrated.
    
    \vspace{0.03in}

\noindent\textbf{Blockchain Deployment}. PEM can also be integrated with the emerging blockchain technology \cite{nakamoto2008bitcoin}. Specifically, the final distribution and transaction between the sellers and buyers can be realized by the smart contract of the blockchain to ensure the integrity and truthfulness (extra anonymity and privacy should be ensured on the blockchain) \cite{kosba2016hawk}. Moreover, the on-line blockchain can also facilitate the communication of the MPC protocols in the PEM. 

\vspace{0.03in}
    
\noindent \textbf{Secure Computation}. The recent protocols/systems on secure computation (e.g., MPC-as-a-service \cite{MPCasService}, against both semi-honest and malicious adversaries \cite{furukawa2017high}, MPC for small number parties \cite{Byali:2018:FSC:3243734.3243784}) cannot be adapted to solve our problem for the following two major reasons: (1) whether the system can function real time transactions has not been validated in most of such systems (we have validated the feasibility and scalability of deploying PEM in real time in Section \ref{sec:exp}), and (2) incentive problems are not studied in most of such systems. Thus, the proposed cryptographic protocols in PEM can also complement the literature of secure computation for privacy preserving trading (which is limited to our best knowledge).

\section{Evaluation}
\label{sec:exp}

In this section, we illustrate our system implementation for the PEM framework and demonstrate the experimental results.

\subsection{Experimental Setup}\label{subsec:expset}

Our PEM framework is deployed on the NSF CloudLab platform (https://docs.cloudlab.com), of which the server has eight 64-bit ARMv8 cores with 2.4 GHZ, 64GB memory and 120GB of flash storage with Ubuntu:16.04 OS. Docker (https://docs.docker.com/) is utilized to start a container for each buyer/seller. We created the image for container based on the raw image of Ubuntu 16.04 by integrating all the system environments (e.g., JRE and JDK), and source codes. 

We conducted the experiments on 300 smart homes' real power generation data (solar panels) and load data over one day (available at UMASS Trace Repository \cite{Barker2012}). We tune the following parameters in evaluations:

\begin{itemize}
  
\item the number of smart homes $n\in [100,300]$; 
\item the number of trading windows $m\in[1, 720]$: from 7:00AM to 7:00PM (a trading window per minute); 
\item the key size: 512/1024/2048-bit.
   
\end{itemize}

\noindent\textbf{Benchmark.} There is no existing schemes which can be directly applicable to our problem setting. Then, we use the traditional energy trading (without PEM) as the benchmark: all the agents directly purchase energy from the main grid. Specifically, if a seller (with excessive energy) will sell them back to the main grid with the offered price $pb_g^t$, and the buyer (short of energy) will buy energy from the main grid with the retail electricity price $ps_g^t$. We set the retail price as $ps_g^t$=120 cents/kWh and offered price from the main grid $pb_g^t$= 80 cents/kWh. We also set the price range of PEM as $[90, 110]$ cents/kWh. Note the interaction between agents and the main grid will increase greatly for trading without PEM.

\begin{figure}[!h]
	\centering
		\includegraphics[angle=0, width=1\linewidth]{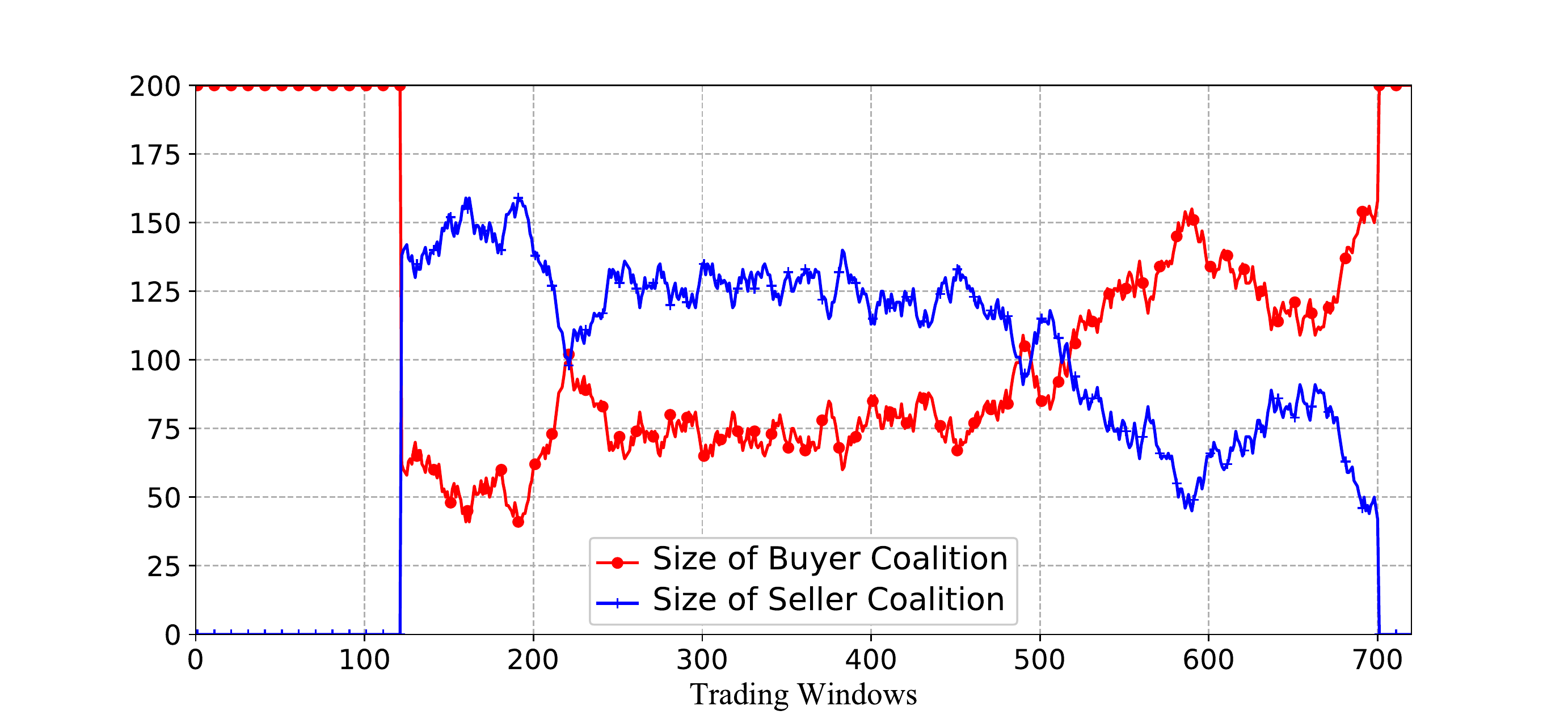}
	\caption[Optional caption for list of figures]
	{Coalition Sizes vs. Trading Windows} 
\label{fig:size}
\end{figure}

\begin{figure*}[!tbh]
	\centering
	\subfigure[Runtime (2048-bit)]{
		\includegraphics[angle=0, width=0.32\linewidth]{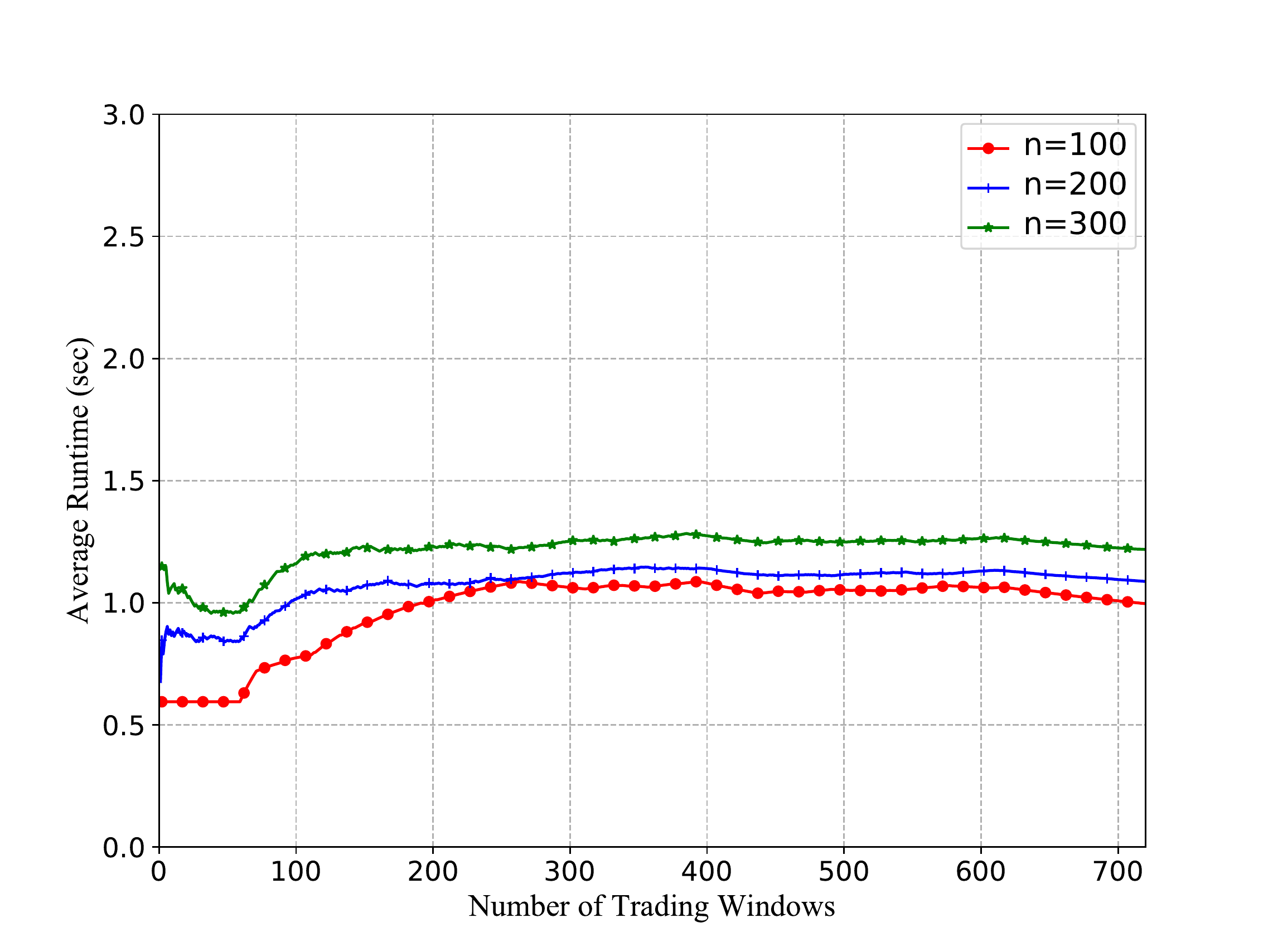}\hspace{-0.2in}
	\label{fig:1-1}}
	\subfigure[Runtime]{
		\includegraphics[angle=0, width=0.32\linewidth]{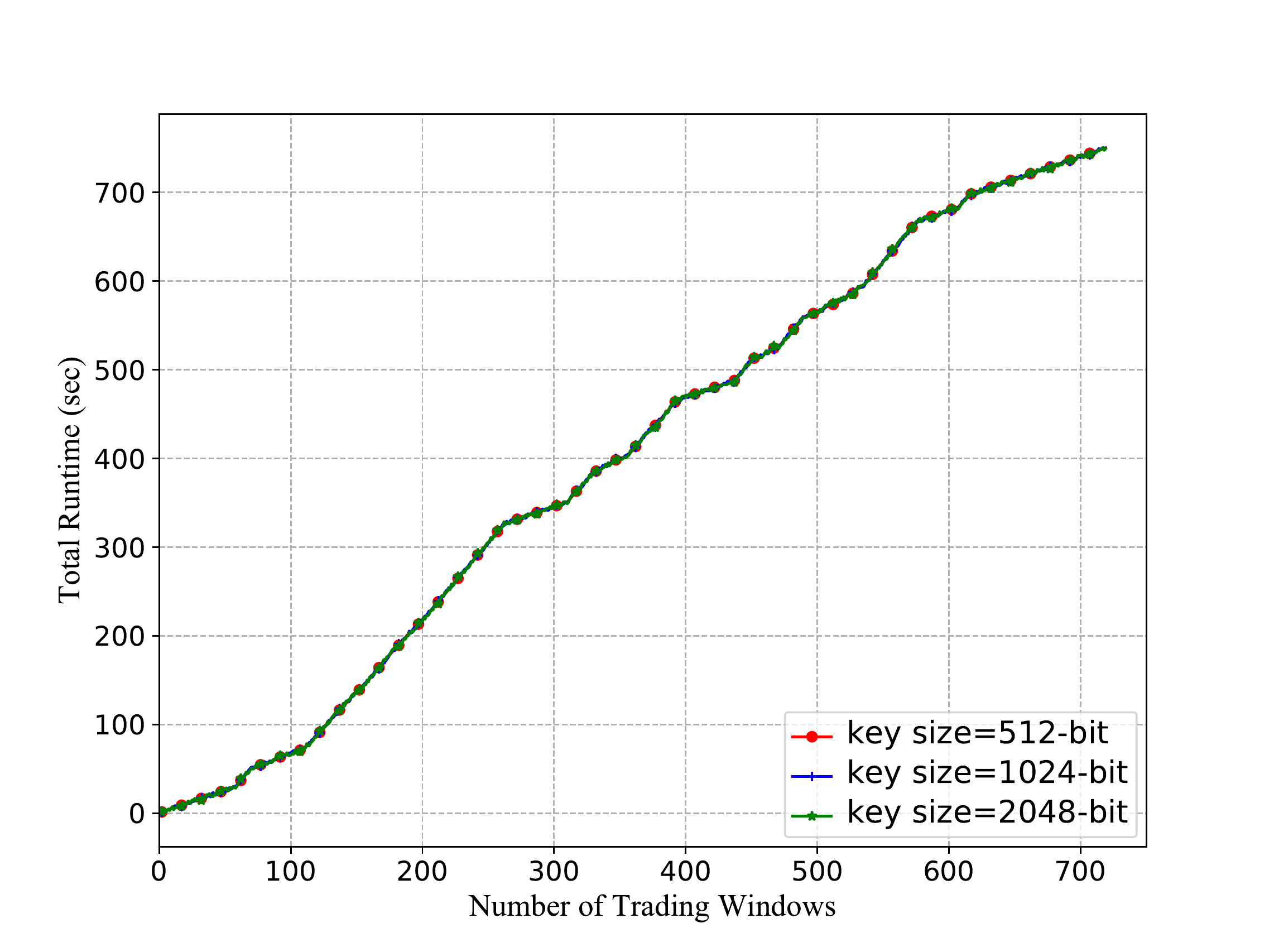}\hspace{-0.2in}
		\label{fig:1-2} }
\subfigure[Runtime (720 Trading Windows)]{
		\includegraphics[angle=0, width=0.32\linewidth]{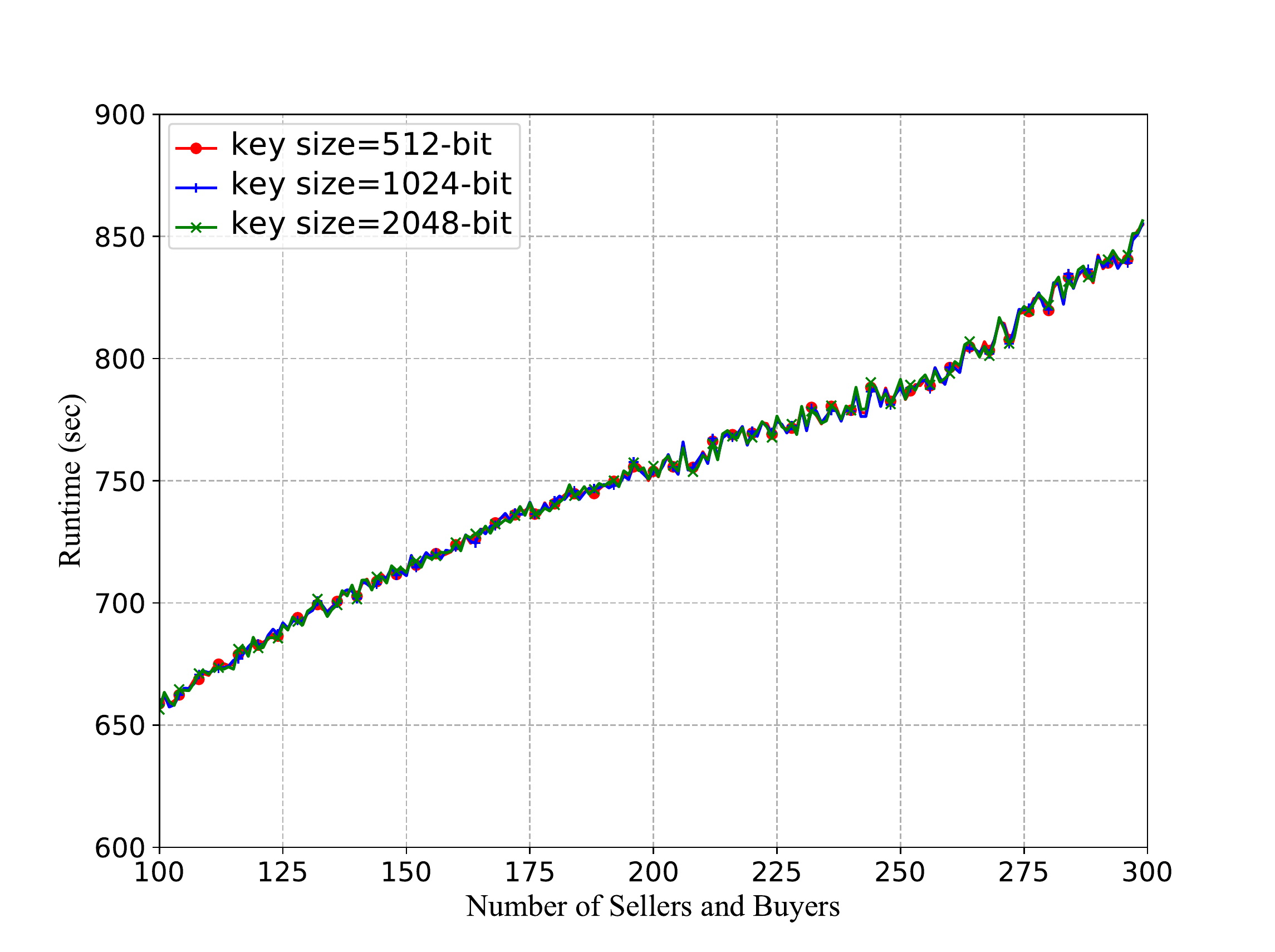}\label{fig:1-3} }
	\caption[Optional caption for list of figures]
	{Computational Performance Evaluation for the PEM Framework (negligible latency for minute-level inputs). 
	}
	\vspace{-0.1in}
	\label{fig:runtime}
\end{figure*}

Figure \ref{fig:size} illustrates the sizes of seller and buyer coalitions (the number of smart homes) in all 720 trading windows. The roles of smart homes change over time.

\subsection{Computational Performance Evaluation}

We evaluate the computational cost of PEM among 100 to 300 agents, using three different key sizes (512/1024/2048-bit). Fig. \ref{fig:1-1} shows the average runtime for a single trading window (including securely evaluating the market, computing the optimal price as well as the energy trading distribution amounts) as the number of trading windows varies from 1 to 720. The average runtime for each trading window is around 1 sec. This indicates that our PEM framework can efficiently function real-time trading in practice (\emph{with negligible latency}). 

Fig. \ref{fig:1-2} demonstrates the total runtime on different number of trading windows among 200 agents (with three different key sizes $512/1024/2048$-bit). Given the same number of trading windows, we observe that the key size for encryption and decryption executed in our protocols does not affect the runtime (since the encryption and decryption are independently executed in parallel during idle time). Finally, Fig. \ref{fig:1-3} validates that the total runtime increases as the number of agents increases.  

\subsection{Energy Trading Performance Evaluation}
We have also evaluated the trading performance of our PEM framework from the following perspectives:

\begin{itemize}
    \item the optimal price in all the trading windows;
       \item utility received by some representative sellers;
    \item total cost $\Gamma^t$ for the buyer coalition; 
    \item interactions with the main grid.
\end{itemize}
\subsubsection{Optimal Trading Price}

Fig. \ref{fig:price} shows the optimal prices in all the 720 trading windows. We can observe that the price changes over time: in the first few trading windows, the price equals $ps_g^t$ (purchasing all the energy from the grid). This shows that at the beginning of the day, the generation is close to 0, all the agents have to buy energy from the main grid. Similarly, at the end of day (around 7:00pm), the price is still $ps_g^t$ for the same reason. Furthermore, in many trading windows in the middle of the day, the trading price would be lower bounded: either the optimal price in the general market is out of range (this also applies to the upper bound), or the extreme market occurs. 

\subsubsection{Utility and Total Cost} We fix the preference parameter $k=20, 40$ for all the sellers in different trading windows.  
Fig. \ref{fig:utility} presents the utility of two agents (\emph{which are sellers in all 720 trading windows}).  
We have the following observations: 
\begin{itemize}
\item The utility of the agents with our PEM framework is higher than their utility without PEM (buyers only purchase energy from the main grid).

\item The utility improvement (with the PEM) in case of $k=40$ is higher than $k=20$. since lower preference parameter would make the sellers to sell more local energy (which results in more payoff).

\end{itemize}

In addition, Fig. \ref{fig:totalcost} shows the total cost of buyer coalition in the PEM (for 100 and 200 agents), which can be greatly reduced in all trading windows (e.g., 25.3\% in the current setting on average).

\subsubsection{Interaction with the Main Grid} Our PEM framework can also benefit the main grid by reducing the interactions between the agents and the grid, which is measured by the amount of electricity all the agents request from or feed into the grid. As shown in Fig. \ref{fig:impact}, since more energy can be traded in the PEM framework among agents, the interactions with the PEM are much lower than the original energy consumption (without the PEM).

\subsection{Communication Overheads}
We have also evaluated the bandwidth consumption of all the smart homes while executing the secure computation and communication among the 200 smart homes with different key sizes (512-bit, 1024-bit and 2048-bit). Table \ref{table:bandwidth_key} shows the average bandwidth over different numbers of trading windows (of all the smart homes). With such minor bandwidth consumption, our PEM framework can be deployed in most of the networking environments.

\begin{table}[!ht]
	\footnotesize
	\centering
	\caption{Average Bandwidth (MB) over $m$ Trading Windows}
		\begin{tabular}{c|ccccccccc}	
			\hline
		$m$	& 300& 360 & 420 & 480 & 540 & 600 & 660 & 720\\
			\hline
            512-bit &0.45&0.54&0.48 &0.52 &0.47 &0.48&0.55&0.46 \\
			1024-bit &0.84&0.88 &1.02 &0.93&0.98 & 1.06&0.97 &0.96 \\
			2048-bit &1.87&2.12 &2.05 &2.11&2.20&2.16&2.05&2.01 \\\hline
		\end{tabular}
	\label{table:bandwidth_key}
\end{table}

\begin{figure*}[!tbh]
	\centering
    \subfigure[Trading Price (200 smart homes)]{
		\includegraphics[angle=0, width=0.48\linewidth]{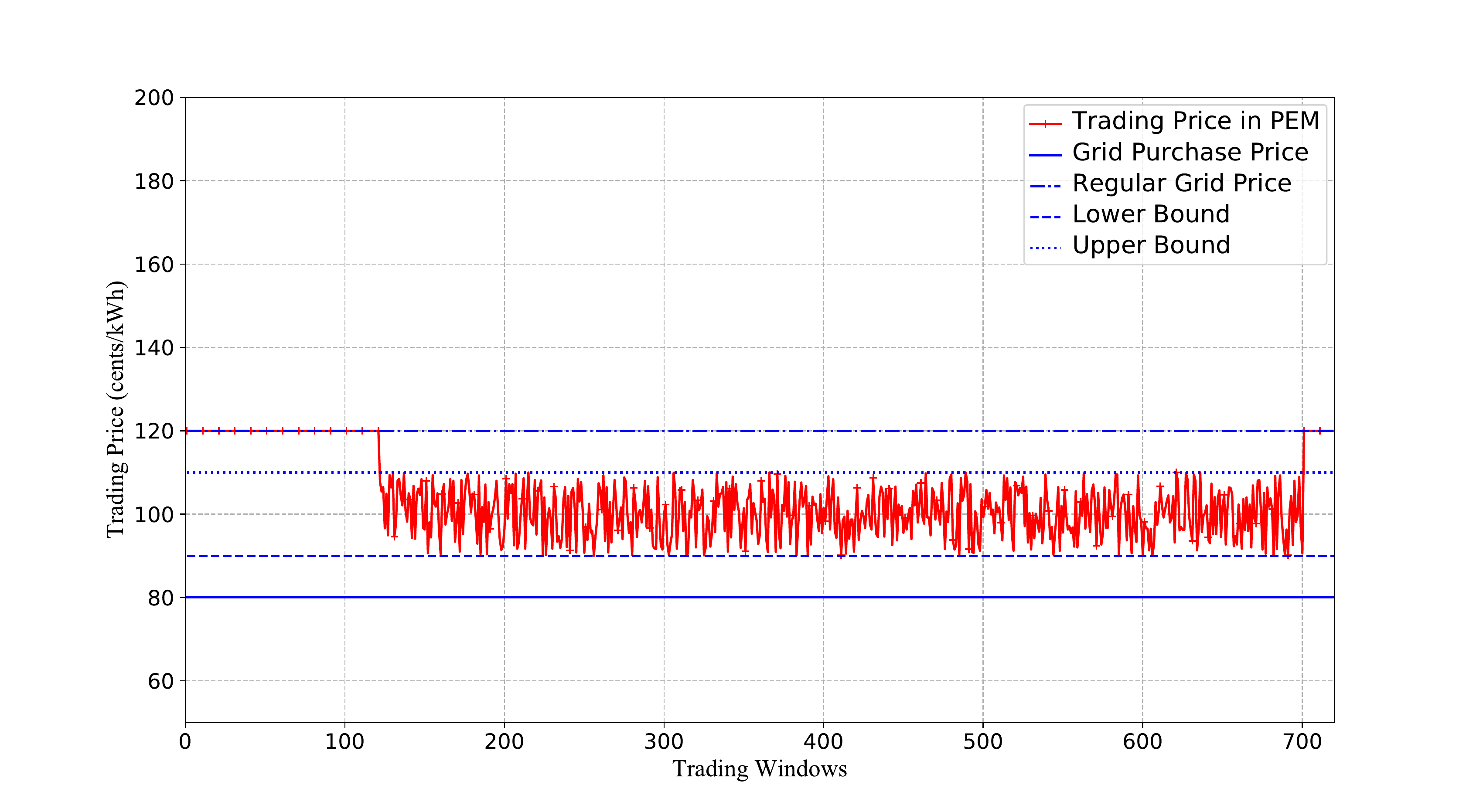}
		\label{fig:price} }
      \subfigure[Utility of Two Sellers]{
		\includegraphics[angle=0, width=0.48\linewidth]{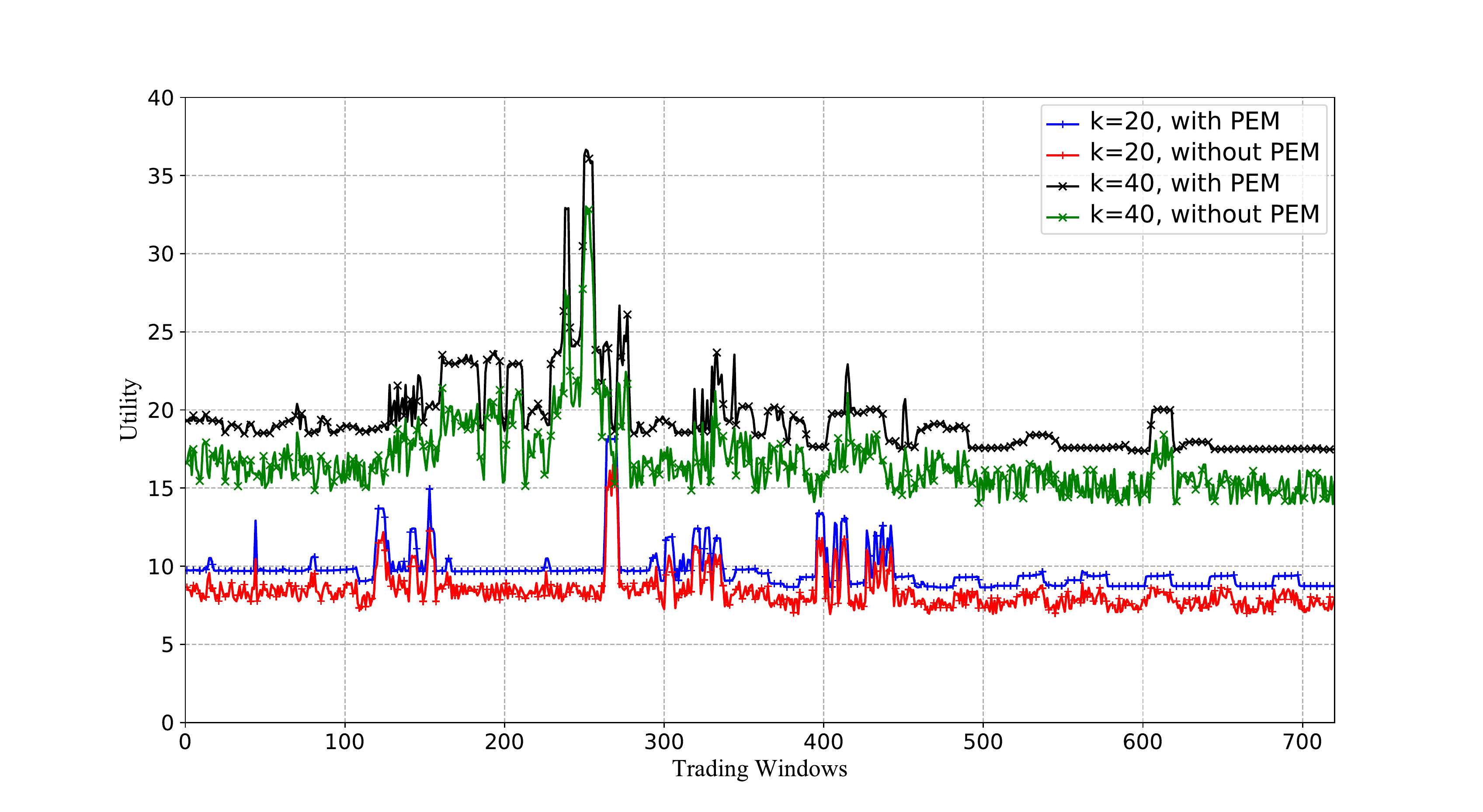}
		\label{fig:utility} }
	
	\subfigure[{Total Costs of Buyer Coalition}]{
		\includegraphics[angle=0, width=0.48\linewidth]{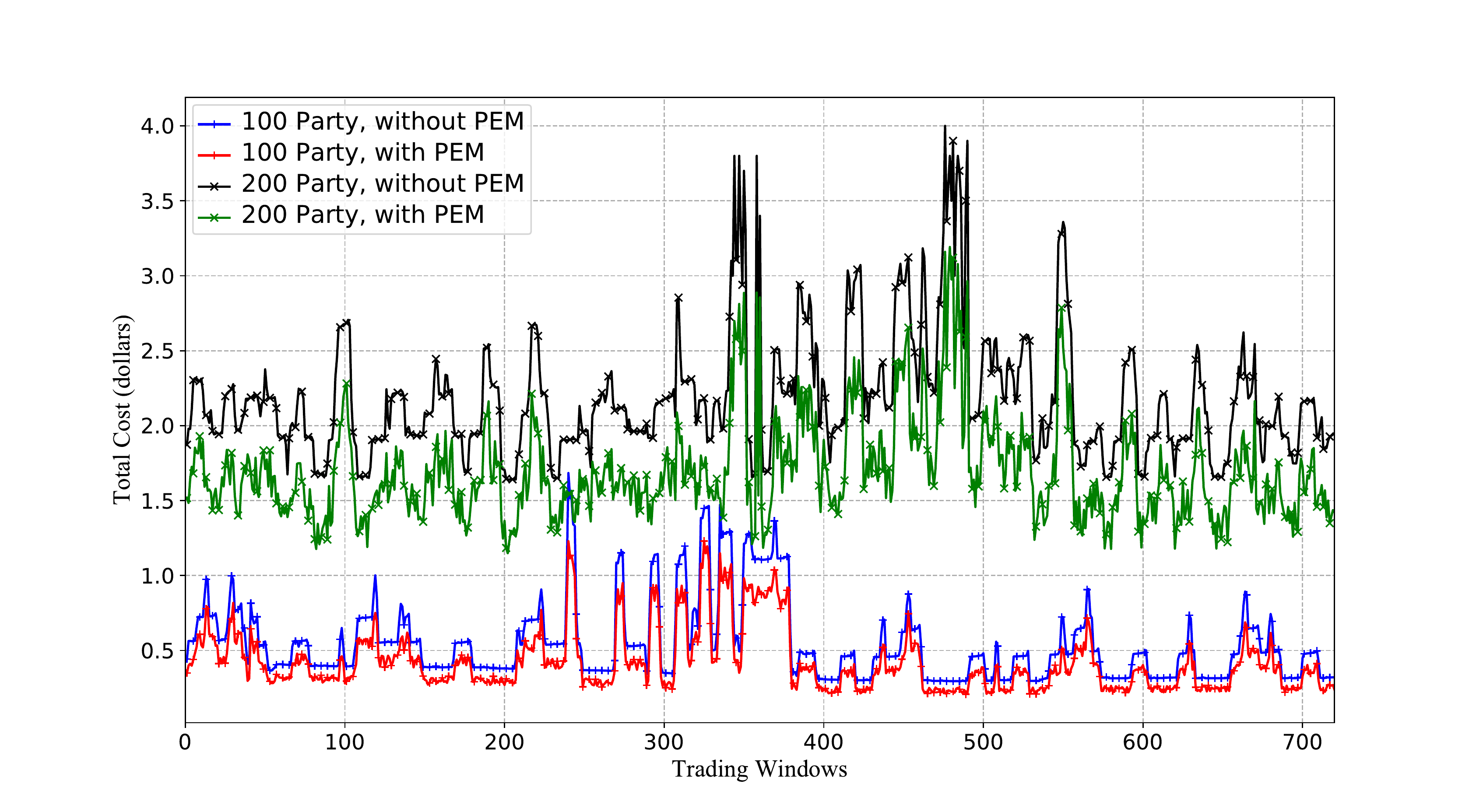}
	\label{fig:totalcost} }
	\subfigure[Interactions with Main Grid]{
		\includegraphics[angle=0, width=0.48\linewidth]{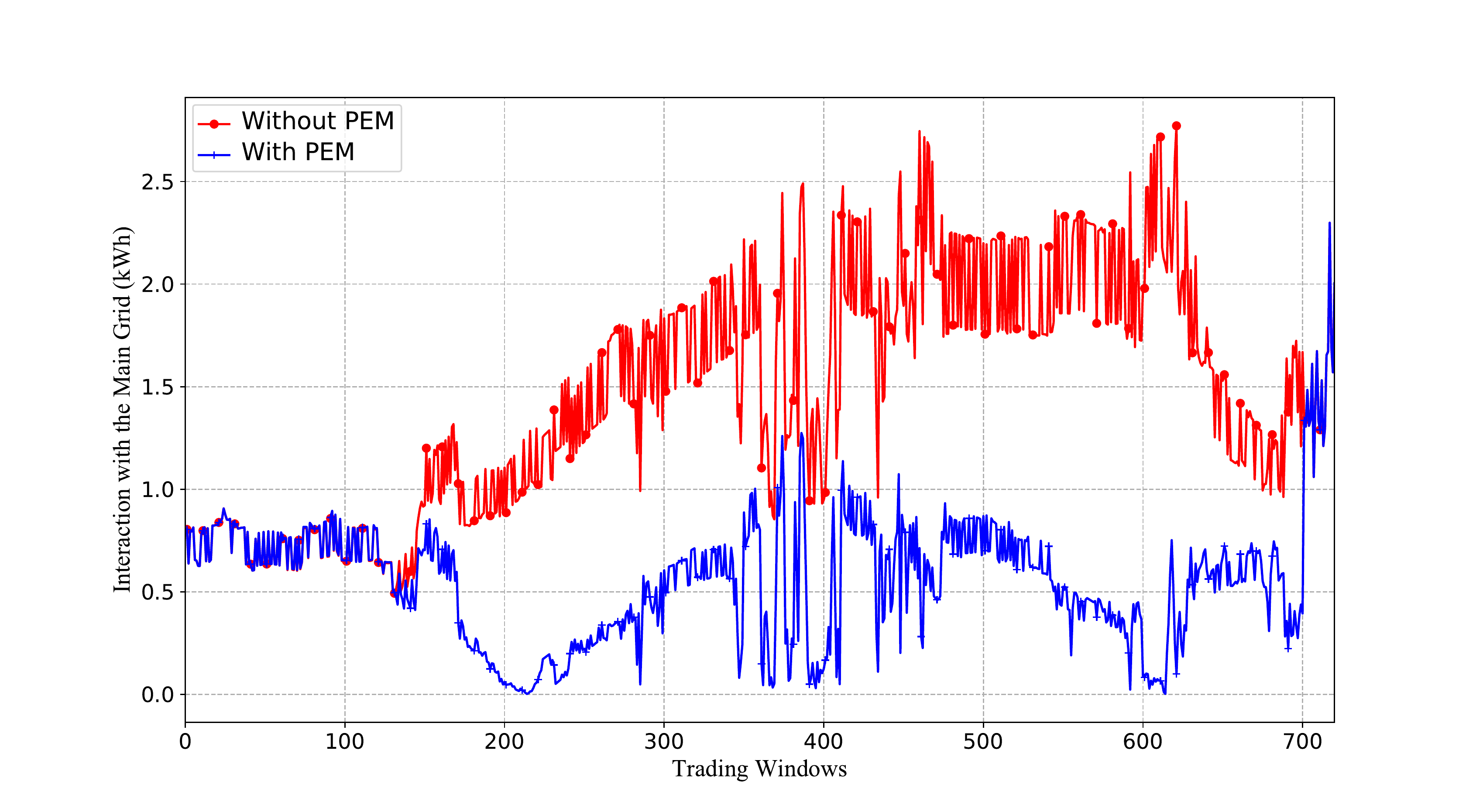}
	\label{fig:impact} }
	\caption[Optional caption for list of figures]
	{Energy Trading Performance Evaluation for the PEM Framework}\vspace{-0.1in}
\end{figure*}

\section{Related Work}
\label{sec:related}
\noindent\textbf{Smart Grid Privacy}. Most smart grid privacy research focuses on protecting data collected from smart meters integrated in the power grid \cite{MarmolSUP12}. Different privacy preserving techniques have been proposed to tackle such privacy concerns \cite{AcsC11,RottondiVC13}. For instance, He et al. \cite{Xinwen13} presented a distortion based privacy preserving metering scheme by introducing tolerable noise to obfuscate the consumption data. Rottondi et al. \cite{RottondiVC13} leveraged secure communication protocols to implement a privacy preserving infrastructure which allows utilities and data consumers to collect measurement data by securely aggregating smart metering data. \cite{ninghuiBat12} studied how to utilize the renewable energy sources (i.e., batteries) to hide the load/metering information of individual households.\cite{dimitriou2013privacy} proposed a privacy-preserving way to aggregate smart metering data for the billing of utility provider. Recently, \cite{XieHW19} researches on privately balancing the power load between the main grid and agents (microgrids).  However, none of such techniques can be applied to multiagent energy trading.
\vspace{0.03in}

\noindent\textbf{Secure Computation.} The theory of Secure multiparty computation (MPC) \cite{Yao86,Goldreich87} has significantly advanced the development of collaborative computation among multiple parties, which guarantees that functions can be securely computed with limited disclosure. Recently, secure computation has been intensively applied for privacy preserving system design in different contexts such as location-based services \cite{PopaBBL11}, and medical data analysis \cite{WangHZTWB15}. Moreover, Furukawa et al. \cite{furukawa2017high} have recently proposed a three-party secure computation against both semi-honest and malicious adversaries, which achieves low communication complexity and simple computation. Barak et al. \cite{MPCasService} have proposed the MPC-as-a-service concept and implemented an end-to-end system for large scale P2P secure computation with low bandwidth. 
%Besides privacy preservation ensured by SMC, our PEM achieves incentive compatibility in Stackelberg game.
\vspace{0.03in}

\noindent\textbf{Energy Trading}. Energy trading has been widely discussed with the development of smart grid. The integration of renewable energy sources has greatly motivated studies of energy market, e.g., incentive mechanisms for trading \cite{incentive18} and auction \cite{LiuAuction}, and multi-agent energy management \cite{MASStore17}, which could improve the stability and utility of the grid. Furthermore, distributed energy trading has been identified as a promising scheme for the energy market in \cite{WtushP2P,ZHANGp2pbid}. There are also many ongoing projects, e.g., LO3 Energy \cite{lo3energy} which focuses on the commercial energy trading to encourage residential units to trade with the neighborhoods. \cite{gai2019privacy} focuses on the energy trading by blockchain. However, the scalability of the proposed scheme is not very clear. To the best of our knowledge, we design and implement the first privacy preserving distributed energy trading framework.

%Moreover, trading and auction models among distributed entities have been also studied in other contexts, e.g., mobile data trading , crowd sensing , text and image data .

\section{Conclusion}
\label{sec:concl}

We have proposed a novel privacy preserving distributed energy trading framework (PEM) which ensures privacy, individual rationality, and incentive compatibility. The optimal price for both sellers and buyers can reach a unique equilibrium in the modeled Stackelberg game. Moreover, we have designed novel cryptographic protocols for the entire PEM framework. Theoretical analyses are given to prove all the properties of the PEM framework. Finally, we have implemented a prototype for PEM, and conducted experiments to evaluate the system performance using real smart grid datasets. The experimental results (high computational efficiency, low bandwidth consumption and negligible latency) demonstrate that PEM can be readily integrated into the smart grid infrastructure.

\section*{Acknowledgments}
This work is partially supported by the National Science Foundation (NSF) under Grant No. CNS-1745894. The authors would like to thank the anonymous reviewers for their constructive comments.

% Generated by IEEEtranS.bst, version: 1.12 (2007/01/11)

\end{document}